\DeclareMathOperator*{\argmax}{arg\,max}
\newcommand{\comment}[1]{}
\newcommand{\rcast}{\text{rcast}}
	\newtheoremstyle{redstyle}
	{3pt}
	{3pt}
	{\color{black}}
	{}
	{\color{red}\bfseries}
	{:}
	{.5em}
	{}
	\theoremstyle{redstyle} 
\newcommand{\tj}[1]{#1}
\newtheorem{definition}{Definition}
\newtheorem{theorem}{Theorem}
\newtheorem{corollary}{Corollary}
\newtheorem{lemma}{Lemma}
\newtheorem{fact}{Fact}
\newtheorem{proposition}{Proposition}
\newcommand{\labell}[1]{\label{#1}}
\newcommand{\NAT}{{\mathbb N}}
\newcommand{\cN}{{\mathcal N}}
\renewcommand{\thefootnote}{\fnsymbol{footnote}}
\newcommand{\remove}[1]{}
\newif\iffull
\begin{document}

	\title{MSF and Connectivity in Limited Variants of the Congested Clique
		\footnote{%
			This work was supported by the Polish National Science Centre grant
			DEC-2012/07/B/ST6/01534.
		}
	}

	\author{Tomasz Jurdzi\'{n}ski}
	\author{Krzysztof Nowicki} 

	\affil{Institute of Computer Science, \\University of Wroc{\l}aw, Poland.}

	\date{}
	
	\maketitle              
	
	\begin{abstract}
	The congested clique is a synchronous, message-passing
		model of distributed computing in which each 
		\tj{computational unit (node)} in each
		round can send message of $O(\log n)$ bits to each other node
		of the network, where $n$ is the number of nodes. This model
		has been considered under two extreme scanarios: \emph{unicast} or \emph{broadcast}.
		In the unicast model, a node can send (possibly) different message to each other node of the network. In contrast, in the broadcast model each node sends a single (the same) message to all other nodes. Following \cite{BeckerARR15}, we study the congested clique model 
		parametrized by
		the range $r$, the maximum number of different messages a node can
		send in one round. 

	\tj{Following recent progress in design of algorihms for 
	graph connectivity and minimum spanning forest (MSF) in the
	unicast congested clique, we study these problems in limited
	variants of the congested clique.}	
		We present the first sub-logarithmic algorithm for connected components in the broadcast congested clique.
		Then, we show that efficient unicast deterministic algorithm for MSF \cite{Lotker:2003:MCO:777412.777428} and
		randomized algorithm for connected components \cite{GhaffariParter2016} can be efficiently \tj{implemented
		in the rcast model with range $r=2$, the weakest model of the congested clique above the broadcast variant ($r=1$) in the hierarchy with respect to range.} More importantly, our \tj{algorithms give the first solutions with optimal capacity of communication
		edges}, while preserving small round complexity. 
		
		\iffull
		An interesting direction arising from these results 
		is to determine a relationship between adaptiveness (the number of rounds) and communication complexity (the sum
		of sizes of transmitted messages).
		\fi
		
\comment{		
		Our main focus is on the impact of the parameter $r$ on round complexity. Firstly, we show a hierarchy result (for determinstic as well as randomized protocols) saying that the increase of $r$ increases the family of problems which can be solved in a given number
		of rounds. Secondly, we observe an exponential gap between deterministic and randomized complexity of some graph problems for a fixed
		constant value of the parameter $r$. Thirdly, we study round complexity 
			of specific problems in the rcast model. Specifically, we show that connectivity and a minimum spanning forest problem can be solved much faster than the obvious $U\cdot \log_r n$, where $U$ is the round complexity of a problem in the unicast model of the congested clique.
}
			
	\end{abstract}

	\comment{
	\vfill
	
	\small
		\noindent
		Corresponding author: {, e-mail:  \url{}
		\\
		
		\noindent

	\vfill
	
	}
	} 

	\renewcommand{\thefootnote}{\arabic{footnote}}



\newcommand{\rrcast}{\text{rrcast}}
\newcommand{\RCAST}{\text{RCAST}}
\newcommand{\lrcast}{\text{lrcast}}
\newcommand{\lRCAST}{\text{lRCAST}}

\newcommand{\gora}[1]{\vspace*{#1pt}}

	\section{Introduction}
	\label{s:intro}
	Recently, the congested clique model of distributed computation attracted much attention in algorithmic community. In this model, each pair of $n$ nodes of a network is connected by a separate communication link. That is, the network forms an $n$-node clique.
Communication is synchronous, each node in each
		round can send message of $O(\log n)$ bits to each other node
		of the network. The main 
		purpose of such a model
		is to understand the role of congestion in distributed
		computation.

		The congested clique model has been mainly considered in two variants: \emph{unicast} or \emph{broadcast}.
		In the unicast model, a node can send (possibly) different message to each other node of the network. In contrast, in the broadcast model each node can only send a single (the same) message to all other nodes \tj{in a round}.
		
		Following \cite{BeckerARR15}, we study the congested clique model 
		parametrized by \tj{the range} $r$,
 		the maximum number of different messages a node can
		send in one round.  
		We call the model with such a restriction the \emph{rcast} congested clique. (Note that $r=1$ corresponds to the broadcast congested clique and $r=n$ to the unicast congested clique.)
		
		As the broadcast and unicast models differ significantly in the amount of information which can be exchanged in a round, it is natural to introduce an intermediate model which uses a quantitative measure of usage
		of the possibility of sending different messages in each outgoing link.
The study of the rcast model is aimed at exploring this research direction.

\subsection{The model}
	We considered the congested clique model with the following parameters:
\iffull
	\begin{itemize}[noitemsep,topsep=0pt]
	\item
	\gora{5}
	$r$ -- the maximum number of different messages a node can send
	over its outgoing links in a round;
	\item
	\gora{5}
	$b$ - the maximum size in bits of a message (bandwidth); 

	\item
	\gora{5}
  $n$ -- the size of the network (number of nodes);
		
	\end{itemize}
\else
	$r$ -- the maximum number of different messages a node can send
	over its outgoing links in a round;	$b$ -- the maximum size of a message (\emph{bandwidth}); 
	$n$ -- the number of nodes in the network/graph.
\fi	
	The model with the above parameters will be denoted $\rcast(n,r,b)$.
	Usually, we consider the model with $b=\log n$ and therefore the model
	$\rcast(n,r,\log n)$ is also denoted $\rcast(n,r)$.
	
\comment{
	We also introduce a new parameter $p$ equal to the maximum number of different transmission patterns a node	can use over an execution of an algorithm. The model
	with such a restriction is denoted $\rcast_p(n,r,b)$. 
	
	Our interest is mainly in the scenario with $p=O(1)$ (i.e., $p$ is independent of other parameters, including the size of network) called here \emph{limited rcast}
	and denoted $\lrcast(n,r,b)=\bigcup_{i\in\NAT} \rcast_i(n,r,b)$.
}

\newcommand{\round}{\text{ROUND}}
\newcommand{\rround}{\text{RROUND}}
\comment{
The round complexity
$\round_{n,r,b}(P)$ of a problem $P$ is the number of rounds sufficient and necessary 
to solve $P$ on graphs with $n$ nodes in the model $\rcast(n,r,b)$.
}

\iffull
\subsection{Randomized algorithms}
\else
\fi

We consider randomized algorithms in which a computational unit in each
node of the input network can use private random bits in its computation.
%
We say that some event holds with high
probability (whp) for an algorithm $A$ running on an input of size $n$  if this event holds with probability $1-1/n^c$ for a given constant $c$.

\iffull
\subsubsection{Graph problems in the congested clique model}
\else
\paragraph{Graph problems in the congested clique model}
\fi
Graph problems in the congested clique model are considered in the following framework. The joint input to the $n$ nodes of the network is an
undirected $n$-node weighted graph $G(V,E,w)$,
where each node corresponds to a node of the communication network
\tj{and weights of edges are integers of polynomial size (i.e., each weight is a bit sequence
of lenght $O(\log n)$).}
Each node $u$ initially knows the network size $n$, its unique ID in $[n]$, 
the list of IDs of its neighbors in the input graph and the weights of its incident edges.
All graph problems are considered in this paper in accordance with this definition.

\iffull
\subsubsection{Connected Components and Minimum Spanning Forest in the congested clique}
\else
\fi
 In the paper, we consider connected components problem (CC) and minimum spanning forest problem (MSF). 
Our goal is to compute CC or MSF of input graph, i.e.,  each node should know the set of edges inducing CC/MSF at the end of an execution of an algorithm.

\iffull
\subsection{Complexity measures}
\else
\paragraph{Complexity measures}
\fi
\tj{The key complexity measure considered in context of the congested clique models is \emph{round complexity} (called also \emph{time}) which
is equal to the number of rounds in which an algorithm works for instances of problems
of a given size.}

\tj{For the rcast model, the \emph{range} $r$ is also a parameter determining complexity of
an algorithm. 
Below, we give an observation justifying the statement that the key increase in 
communication power is between $r=1$ (broadcast congested clique) and $r=2$ (the weakest
variant of the congested clique above the broadcast model wrt the range).
}
    \par 
    \begin{fact}
    \label{simulationTheorem}
    One round of an algorithm $A$ from $\rcast(n,n)$ may be simulated in $\rcast(n,r)$ in $O(\log_r n)$ rounds.
    \end{fact}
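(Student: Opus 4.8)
The plan is to view one round of the unicast model ($r=n$) as a single matrix--transpose task. Form the $n\times n$ array $M$ with $M[u][v]$ the ($\log n$-bit) message node $u$ wants to deliver to node $v$ in that round; initially $u$ holds row $u$, and completing the round means node $v$ must hold column $v$. I would transpose $M$ by a recursion of depth $\lceil\log_r n\rceil$ that permutes the index coordinates one base-$r$ digit at a time. Assume $n=r^d$ for simplicity (otherwise pad the node set up to the next power of $r$, so each node simulates $O(1)$ virtual ones) and identify each node with a length-$d$ string of digits over $\{0,\dots,r-1\}$.

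One recursive level is one communication round. Partition $[n]$ into the $r$ blocks $B_0,\dots,B_{r-1}$ given by the leading digit, each of size $n/r$. In this round each node $u\in B_g$ sends, to the position-matched node of every block $B_h$, the sub-row $\{M[u][v]:v\in B_h\}$; since there are exactly $r$ target blocks this is at most $r$ distinct outgoing messages, matching the range bound, and each node receives $r$ messages bundling $n$ data items in total. After the round every entry of $M$ destined to $B_h$ lies inside $B_h$, and the task remaining inside each block is an identical transpose on $n/r$ nodes (the leading-digit coordinate being carried inside the items). Recursing on all blocks in parallel, after $d=O(\log_r n)$ rounds every node holds its column. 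The correctness invariant is that after $i$ levels $M[u][v]$ resides at the node whose first $i$ digits equal those of $v$ and whose remaining $d-i$ digits equal those of $u$; at $i=d$ this is $v$. The resource bounds hold at every level because the $r\times r$ block decomposition always presents $r$ targets and each node always holds exactly $n$ items, hence receives at most $n-1$ messages.

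The step I expect to require the most care is the size of the messages exchanged: at each level a node ships a whole sub-row, i.e.\ about $n/r$ of the original $\log n$-bit entries packed into one message, so the simulation is really carried out in $\rcast(n,r,O((n/r)\log n))$. I would therefore carry the bandwidth explicitly through the recursion (it stays $\Theta((n/r)\log n)$ at every level and collapses to $\Theta(\log n)$ only at the unicast endpoint $r=n$), and note by a counting argument that some such bundling is unavoidable -- a node can originate only $O(r\log n)$ bits of fresh content per round whereas $\Theta(n\log n)$ bits must leave it in the worst case -- so no scheme using only $\log n$-bit messages can beat $\Theta(n/r)$ rounds, and the statement should be read with this aggregation in mind. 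The remaining details (padding to a power of $r$, the position-matching bijection between blocks, the bookkeeping for the carried digit coordinates) are routine.
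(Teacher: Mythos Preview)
Your proposal has a genuine gap. You correctly notice that your matrix-transpose recursion forces each node to ship $\Theta((n/r)\log n)$ bits per round, which violates the bandwidth constraint of the model ($\rcast(n,r)$ means $b=\log n$ by the paper's convention). You then try to salvage this by arguing that no scheme with $\log n$-bit messages can do better than $\Theta(n/r)$ rounds, so ``the statement should be read with this aggregation in mind.'' That rescue is incorrect, and with it the whole approach collapses.

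The flaw is in your counting argument: you assert that a node ``can originate only $O(r\log n)$ bits of fresh content per round.'' This counts only the \emph{contents} of the $r$ distinct messages and ignores that the \emph{assignment} of messages to recipients also carries information. A sender picks $r$ strings and then, for each of its $n-1$ neighbours, decides which of the $r$ strings that neighbour receives; the latter choice alone conveys $\lfloor\log r\rfloor$ bits to each recipient. So a node can push roughly $(n-1)\log r$ bits out per round even with tiny messages, and $O(\log_r n)$ rounds suffice to deliver $(n-1)\log n$ bits.

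This is exactly what the paper does, and its argument is a one-liner: chop each $\log n$-bit message into $\lceil \log n / \lfloor\log r\rfloor\rceil = O(\log_r n)$ blocks of $\lfloor\log r\rfloor$ bits and send one block per round. A block has at most $2^{\lfloor\log r\rfloor}\le r$ possible values, so the range constraint is met automatically, the bandwidth stays within $\log n$, and no routing, recursion, or padding is required. Your transpose machinery is solving the wrong problem (moving whole rows around rather than trickling bits through), and the impossibility claim you lean on to justify the bandwidth blow-up is false.
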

\iffull		
    \begin{proof}
      The local computation part of the algorithm $A$ stays the same. As for communication part, we can split each message of original protocol into blocks of size $\lfloor \log_r \rfloor$ and sent them in separate rounds.  Therefore, if in one round protocol sent message of size $\log n$, after $\frac{\log n}{\lfloor \log_r \rfloor} = O(\log_r n)$ rounds whole message would be sent to receiver, with no more than $2^{\lfloor \log_r \rfloor} \leq r$ messages per round.
    \end{proof}
\else
\fi		
		
    \begin{corollary}
    Given an algorithm $A$ solving a problem $P$ in $\rcast(n,n)$ in $R(A)$ rounds, one can build an algorithm $A'$ solving $P$ in $\rcast(n,r)$ in $O(\frac{R(A)}{\log r})$.
    \end{corollary}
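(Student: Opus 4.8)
The plan is to derive the corollary directly from Fact~\ref{simulationTheorem} by a round-by-round simulation. First I would take the algorithm $A$ that solves $P$ in $\rcast(n,n)$ using $R(A)$ rounds, and simulate it round by round in $\rcast(n,r)$. By Fact~\ref{simulationTheorem}, each single round of $A$ can be replaced by $O(\log_r n)$ rounds of an $\rcast(n,r)$ protocol that delivers exactly the same messages (the local computation being unchanged). Concatenating these simulations over all $R(A)$ rounds of $A$ yields an algorithm $A'$ in $\rcast(n,r)$ whose output is identical to that of $A$, hence $A'$ also solves $P$.

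Next I would bound the round complexity. The total number of rounds of $A'$ is $R(A)\cdot O(\log_r n)$. It remains only to rewrite $\log_r n$ in the stated form: since $\log_r n = \frac{\log n}{\log r}$, we get $R(A)\cdot O(\log_r n) = O\!\left(\frac{R(A)\log n}{\log r}\right)$. If one works with the convention that $n$ is a fixed parameter of the model so that the $\log n$ factor is absorbed (as is implicit in the statement $O(\frac{R(A)}{\log r})$ of the corollary), the bound reads $O(\frac{R(A)}{\log r})$; otherwise the precise statement carries the extra $\log n$ factor inherited from Fact~\ref{simulationTheorem}. I would state this explicitly so the reader is not misled.

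There is essentially no obstacle here: the only subtlety worth a sentence is that the simulations of consecutive rounds of $A$ must be composed without interference, i.e., node $u$'s behavior in the simulation of round $t+1$ may depend on messages it received during the (already completed) simulation of round $t$. This is automatic because the simulation of each round terminates before the next one begins, and at that point every node holds exactly the information it would have held after round $t$ of $A$. Hence the inductive invariant ``after simulating the first $t$ rounds of $A$, the global state equals the state of $A$ after $t$ rounds'' is preserved, which is all that is needed. The main ``work'' — splitting messages into blocks of $\lfloor\log_r\rfloor$ bits and respecting the range bound $2^{\lfloor\log_r\rfloor}\le r$ — has already been done inside Fact~\ref{simulationTheorem}, so the corollary is a one-line consequence modulo the bookkeeping above.
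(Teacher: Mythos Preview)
Your proposal is correct and matches the paper's intent: the corollary is stated there without proof, as an immediate consequence of Fact~\ref{simulationTheorem} via round-by-round simulation, exactly as you describe. You are also right to flag the missing $\log n$ factor; the paper's own subsequent sentence speaks of an $O(\frac{\log n}{\log r})$ overhead, so the bound in the corollary should read $O\!\left(\frac{R(A)\log n}{\log r}\right)$ rather than $O\!\left(\frac{R(A)}{\log r}\right)$.
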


\tj{The above observations show that each algorithm designed for the unicast congested clique model might be simulated
in the range cast model with $O(\frac{\log n}{\log r})$ overhead. Thus, for problems of large complexity in the unicast
congested clique, the models unicast and $\rcast{2}$ seem to be very close to each other. However, for problems with
sublogarithmic round complexity in the unicast congested clique, the question about efficient rcast algorithms remains
interesting and relevant.

}

In order to provide accurate measure of the amount of information transmitted
over communication links of a network, we consider the \emph{edge capacity} measure.
The \emph{edge capacity} $\beta_A(i,n)$ is the (maximal) length (in bits) of messages which can be transmitted in the $i$th round of executions of the algorithm $A$ on graphs of size $n$. 
The (total) edge capacity $B(A,n)$ is the sum of edge capacities of all rounds, $B_A(n) = \sum\limits_i \beta_A(i)$.
As $n$ is usually known from the context, we use shorthands
$\beta_A(i)$ and $B_A$ for $\beta_A(i,n)$ and $B_A(n)$, respectively. 

For further references we make the following observation concerning edge capacity of 
algorithms solving CC and MSF. 
\begin{fact}\labell{f:capacity:lower}
Total edge capacity of each algorithm solving the connected components problem or the minimum spanning forest problem is $\Omega(\log n)$.
\end{fact}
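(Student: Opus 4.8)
The plan is an information-theoretic argument: I will exhibit a family of input graphs, all looking identical to some fixed node $v$, whose correct outputs are pairwise distinct and number $2^{\Omega(n\log n)}$, and then observe that $v$ can learn at most $(n-1)B_A$ bits over the whole run, which forces $(n-1)B_A = \Omega(n\log n)$, i.e.\ $B_A = \Omega(\log n)$.

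First I would fix a node $v$ and consider only inputs in which $v$ is isolated while the remaining $n-1$ nodes form a Hamiltonian path visiting them in some order (edge weights arbitrary). Each such graph is a forest, so its minimum spanning forest equals that path; since distinct orders give distinct edge sets, this family has $(n-1)!/2$ members with pairwise distinct correct outputs, and every node --- in particular $v$ --- must be able to output each of them. The same family works for connected components, because a path is its own set of CC-inducing edges; for MSF one could alternatively let the $n-1$ nodes form an arbitrary labeled tree, of which Cayley's formula gives $(n-1)^{n-3}$, again $2^{\Omega(n\log n)}$ possibilities.

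Next I would bound what $v$ can compute. On this family $v$'s local input is just its identifier and $n$ --- $O(\log n)$ bits --- and it has no incident graph edges, so its output is a function solely of the concatenation of all messages it ever receives together with its private random string. In round $i$ each of the $n-1$ links into $v$ carries at most $\beta_A(i)$ bits, so over the run $v$ receives at most $(n-1)\sum_i\beta_A(i)=(n-1)B_A$ bits. For a deterministic algorithm this immediately gives $2^{(n-1)B_A}\ge (n-1)!/2$, hence $B_A=\Omega(\log n)$ by Stirling's formula. For a randomized algorithm correct with high probability I would first fix the private coins of all nodes to a value correct on at least a $1-n^{-c}\ge \tfrac12$ fraction of the family --- such a choice exists by averaging over the $(n-1)!/2$ inputs --- and restrict to those $\ge (n-1)!/4$ inputs; there the protocol is deterministic and correct, so the composed map $(\text{input})\mapsto(\text{transcript at }v)\mapsto(\text{output at }v)$ is a correct function whose second stage is now fixed, hence its first stage is injective on these inputs, giving $2^{(n-1)B_A}\ge (n-1)!/4$ and again $B_A=\Omega(\log n)$.

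The step I expect to require the most care is exactly this last one: because the other nodes' messages depend on the hidden input, one must fix the coins \emph{before} choosing the input and then argue injectivity of the input-to-transcript map on the success set; everything else is routine counting.
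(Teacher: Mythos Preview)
Your argument is correct and follows essentially the same information-theoretic route as the paper: count the number of distinct required outputs ($2^{\Omega(n\log n)}$) and compare it to the $O(nB_A)$ bits any single node can receive. The paper's version is terser --- it simply notes that there are $2^{\Omega(n\log n)}$ partitions of $[n]$ and that a node's initial information is $O(n)$ bits --- whereas you make the argument cleaner by fixing $v$ to be isolated (so its initial view is literally constant across the family) and by counting Hamiltonian paths or labeled trees rather than partitions. You also treat the randomized case explicitly via the standard coin-fixing/averaging step, which the paper's proof omits entirely; this is a genuine improvement in rigor, since the algorithms later in the paper are randomized and the claimed optimality of $O(\log n)$ edge capacity relies on this fact holding for randomized protocols as well.
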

\begin{proof}
At the end of an execution of an algorithm solving CC each node knows a partition
of the set $V$ of nodes into connected components. Information available to a node at the beginning of an algorithm has $O(n)$ bits (a characteristic
vector of the set of its neighbors). As there are $2^{\Omega(n\log n)}$
partitions of a set of size $n$, descriptions of some partitions require $\Omega(n\log n)$
bits. On the other hand, the number of bits received by a node during an execution of $A$ is 
$O(nB(A))$. Thus, in order to collect information of size $\Omega(n\log n)$ in each node,
the capacity $B(A)$ has to satisfy $B(A)=\Omega(\log n)$.

\comment{tu dowod, ktory znalazlem po napisaniu powyzszego:
    \par As there are $(n+1)^{n-1}$ possible rooted spanning forests, and each node have to know MSF as a result of an algorithm, each node must receive at least $\Theta(n \log n)$ bits during execution of algorithm. Thus we know, that any algorithm $A$ finding MSF must have $B(A) \in \Omega(\log n)$, which makes algorithm $\mathbb{L}$ non optimal with respect to $B$. We will show, that with small improvement in presented algorithm it is possible to achieve total capacity of communication edges $O(\log n)$. This version of algorithm will consist of two stages.
}
\end{proof}

\comment{
For a problem $P$, the round complexity
$\rround_{n,r,b}(P)$ is the number of rounds sufficient and necessary 
to solve $P$ on graphs with $n$ nodes in the model $\rrcast(n,r,b)$ with
high probability.
}

\comment{
    \subsection{Related results}
    Considering deterministic algorithms in $\rcast(n,n)$ currently best known solution is one proposed by Lotker et al.\ \cite{Lotker:2003:MCO:777412.777428} working in $O(\log \log n)$ round number. However, if we allow randomization best known solution for $\rcast(n,n)$ requires $O(\log^*n)$ rounds  \cite{GhaffariParter2016}. On the other hand, if we allow bandwidth of size  $\sqrt{n}\log n$, we can compute MSF in constant round number even in broadcast congested clique \cite{DBLP:journals/corr/MontealegreT16}.
}

\subsection{Related work}
The rcast model of the congested clique was introduced in \cite{BeckerARR15,BeckerARR16}. The authors presented examples showing the substantial difference between the case $r=1$ (broadcast model) and $r=2$. Moreover, it was shown that an exponential increase of the range $r$ causes $\omega(1)$ drop in round complexity for some problems. The impact of a single message size $b$ transmitted in a round through a communication link is also studied in \cite{BeckerARR15,BeckerARR16}.\iffull\footnote{In the paper, we usually assume that $b=\log n$. We state explicitly the value of $b$ each time it is not assumed $b=\log n$.}\fi

The broadcast and unicast models of congested clique were studied in several papers, e.g., \cite{Lotker:2003:MCO:777412.777428,HegemanPPSS15,GhaffariParter2016,DruckerKO13,BeckerMRT14,DBLP:journals/corr/abs-1207-1852,DBLP:journals/corr/MontealegreT16}. The recent Lenzen's \cite{DBLP:journals/corr/abs-1207-1852} constant time
routing and sorting algorithm in the unicast congested clique shows the power of the unicast model. (The routing problem according to the definition from  \cite{DBLP:journals/corr/abs-1207-1852} trivially requires $\Omega(n)$ rounds in the broadcast congested clique.) 
Lotker et al.\ \cite{Lotker:2003:MCO:777412.777428} designed a $O(\log \log n)$ round deterministic algorithm for MSF (minimum spanning forest) in the unicast model. 
\tj{Recently, an alternative algorithm of the same complexity has been presented \cite{Korhonen16}.}
The best known randomized solution for MSF in the unicast model works in $O(\log^*n)$ rounds  \cite{GhaffariParter2016}, improving the recent $O(\log\log\log n)$ bound \cite{HegemanPPSS15}. \tj{Reduction of the number of transmitted messages in the MST
algorithms was studied in \cite{DBLP:conf/fsttcs/PS16}.}
\iffull
The tradeoff between the number of rounds and the size of messages has also been
studied recently for various problems.
\fi
If messages can have $\sqrt{n}\log n$ bits (bandwidths $b=\sqrt{n}\log n$), one can compute MSF in constant number of rounds, even in the broadcast congested clique \cite{DBLP:journals/corr/MontealegreT16}.
In \cite{DruckerKO13} the authors proved that it is possible to simulate powerful
classes of bounded-depth circuits in the unicast congested clique, which points out to the power of this model and explains difficulty in obtaining lower bounds for this model. In \cite{BeckerMRT14}, randomized variants of the broadcast congested clique are considered.

Apart from purely theoretical and algorithmic interest in the congested clique, the model also 
relates to other models of processing of large-scale graphs, e.g., the $k$-machine model
\cite{KlauckNP015}, MapReduce \cite{HegemanP15} and the concept of overlay networks.

\subsection{Our results}
\comment{We show hierarchy result (for deterministic as well as randomized protocols) saying that the increase of the range parameter $r$ increases the family of problems which can be solved in a given number
of rounds. More precisely, there exists constants $c,c'>1$ and a problem $P$ such that the round complexity of $P$ for the range $r$ is at least $c'$ times smaller that the round complexity for $r'$, if $r\ge cr'$.
		
Moreover, we observe an exponential gap between deterministic and randomized complexity of specific problems for a fixed
		constant value of the parameter $r$. 
		}
		
\comment{		
Finally, we study round complexity 
			of connected components and minimum spanning forest (MSF)problem in the rcast model. Specifically, we show that the Lotker et al.\ algorithm for the unicast congested clique can be applied for computing connected components in the rcast model with range $r=2$ without loss in round complexity. We provide also an algorithm for MSF which breaks the $\log n$ bound from the broadcast model for $r=\omega(1)$.
}

		We present the first sub-logarithmc algorithm for connected components in the broadcast congested clique.
		Our algorithm works in $O(\log n/\log\log n)$ rounds and scales to models with varying sizes of messages.
		Then, we show that efficient unicast deterministic algorithm for MST \cite{Lotker:2003:MCO:777412.777428} and
		randomized algorithm for connected components \cite{GhaffariParter2016} can be efficiently adjusted to the 
		$\rcast(2)$ model, the weakest variant above the broadcast congested clique in the hierarchy of $\rcast(r)$ models
		for $r>1$. More importantly, our result imply solutions with efficient (optimal, in some case) capacity of communication
		edges, while preserving small round complexity. An interesting direction arising from these results 
		is to determine a relationship between adaptiveness (the number of rounds) and communication complexity (sum
		of sizes of transmitted messages).



	\section{Graph terminology and tools for capacity/range reduction}\labell{sub:s:tools}

\iffull
In this section we provide some terminology and tools useful
for reduction of capacity and range of 
congested clique algorithms.
\fi

\tj{Given a natural number $p$, $[p]$ denotes the set
$\{1,2,\ldots,p\}$.}

\iffull
\subsection{Graph terminology for MST/CC algorithms}
\fi

For a graph $G(V,E)$ and $E'\subseteq E$, $C_1, C_2, ..., C_k \subset V$ is a partition of $G$ into \emph{components} with
respect to $E'\subseteq E$ if $C_i$s are pairwise disjoint, $\bigcup_{i\in[k]} C_i = V$, each $C_i$ is connected with respect to the edges from $E'$ and there are no
edges $(u,v)\in E'$ such that $u\in C_i$ and $v\in C_j$ for $i\neq j$. That is,
$C_1,\ldots, C_k$ are connected components of $G(V,E')$.

\emph{Fragment} of a graph $G(V,E,w)$ 
is a tree $F$ which is a subgraph of a minimum spanning forest of $G$.
A family $\mathbb{F}$ of fragments of $G(V,E,w)$ 
is a \emph{partition} of $G$ into fragments with respect to $E'\subseteq E$
if $F_1$ and $F_2$ have disjoint sets of nodes for each $F_1\neq F_2$ from $\mathbb{F}$,
each $v\in V$ belongs to some $F\in\mathbb{F}$ and each edge of each 
tree $F\in\mathbb{F}$ belongs to $E'$.

Given a partition $\mathcal{C}$ ($\mathcal{F}$, resp.) of a graph $G(V,E)$ into components (fragments, resp.) and $v\in V$,
$C^v$ ($F^v$) denotes the component (the fragment, resp.) containing $v$.

\comment{
    \begin{definition}
    $ $
    \begin{itemize}
      \item \textbf{Components} of a graph $G(V,E)$ with respect to $E'\subset E$ are $C_1, C_2, ..., C_k \subset V$ such that sets $C_i$ are disjoint, $\bigcup C_i = V$ and each of $C_i$ is connected with respect to edges from $E'$.
      \item \textbf{Fragments} of graph $G$ are \textbf{components} with respect to set of edges $E'$ which is a subset of the edges of the minimal spanning forest of graph $G$.
      \item By $C^v$ and $F^v$ we will denote respectively a component containing node $v$ and a fragment containing node $v$
    \end{itemize}
    \end{definition}	
}		
		We will usually consider components with respect to a set of edges which are known to all nodes in congested clique.  

\tj{We say that a fragment (component, resp.) is \emph{growable} if there is an edge connecting it with some other fragment/component in the considered graph.
An edge $(u,v)$ is \emph{incident} to a fragment $F$ (component $C$, resp.) wrt to some partition of a graph in fragments/components
if it connects $F$ with another fragment (component, resp.), i.e., $F^u\neq F^v=F$ or $F^v\neq F^u=F$ ($C^u\neq C^v=C$ or $C^v\neq C^u=C$, resp.).}
\iffull
\subsection{Tools for capacity/range reduction}
\else
\subsubsection{Tools for capacity and range reduction}
\fi
As tools to reduce edge capacity and range of congested clique algorithms,
we introduce the \emph{local broadcast} problem and the \emph{global broadcast} problem. 
In the local broadcast problem, the following
parameters are known to each node of a network
    \begin{itemize}[noitemsep,topsep=0pt]
    \item a set $T\subset V$,
    \item a set $R\subset V$,
	\item a natural number $b$.
     \end{itemize}
Moreover, each node $v\in T$ has its own message $M^u$ of length $b$.
As a result of local broadcast, each node $v \in R$ receives the message $M^u$ from each $u \in T$. 

\begin{proposition}\labell{p:local:broadcast}
Algorithm~\ref{broadcastInFragment} (LocalBroadcast) solves  the local broadcast problem in $O(1)$ rounds with range $r=2$ and capacity $1$,
provided $|T|b=O(n)$. It is possible to execute LocalBroadcast (Algorithm~\ref{broadcastInFragment}) simultaneously for $k$ triplets $(T_i,R_i, b_i)_{i \in[k]}$, as long as $T_i$'s are pairwise disjoint, $R_i$'s are pairwise disjoint and $|T_i|b_i \in O(n)$ for each $i\in[k]$.
\end{proposition}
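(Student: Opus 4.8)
The plan is to reduce the local broadcast problem to a routing-style argument that exploits the full bipartite connectivity between $T$ and $R$ together with the help of all $n$ nodes as relays. The key observation is that the total amount of information to be delivered is $|T|\cdot b = O(n)$ bits, and there are $n$ outgoing links from each node, so on average each node needs to forward only $O(1)$ bits; the difficulty is to make this precise while respecting range $r=2$ and capacity $1$ (a single bit per link per round).

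First I would fix a canonical labeling: since $|T|b = O(n)$, write the concatenation of all messages $\{M^u : u \in T\}$ as a bit string of length $O(n)$, and assign to each bit position a distinct ``slot'' node in $V$ (there are $n$ of them, and only $O(n)$ bits, so a constant number of parallel phases handles the constant factor). In the first $O(1)$ rounds, each $u\in T$ sends the $b$ bits of $M^u$ to the $b$ slot nodes responsible for them; this is one bit per link, and each slot node receives $O(1)$ bits total because the bit-to-slot assignment is injective (up to the constant number of phases). Crucially, the pattern used by $u$ here is ``send bit $0$ to some links, bit $1$ to the others, default to a fixed symbol elsewhere'' — but a node must distinguish which neighbors get which bit, so in a round it partitions its outgoing links into the set receiving $0$ and the set receiving $1$: exactly $r=2$ distinct messages. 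In the second phase, each slot node broadcasts the single bit it holds to all of $R$ — that is range $r=1$, capacity $1$ — and after $O(1)$ such rounds every node in $R$ has collected all $O(n)$ bits and hence every $M^u$.

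For the simultaneous-execution claim with $k$ triplets $(T_i,R_i,b_i)$: because the $T_i$'s are pairwise disjoint and the $R_i$'s are pairwise disjoint, and because $|T_i| b_i = O(n)$ for each $i$, I would allocate to instance $i$ a block of slot nodes of size $\Theta(|T_i| b_i)$; the constraint that guarantees these blocks fit disjointly into $V$ is exactly that the instances can be run in a constant number of grouped phases (one could also note that at the cost of a constant factor one reuses $V$ across instances whose $T_i,R_i$ are disjoint, since a node acts as a sender in at most one instance and as a receiver in at most one instance). The per-link load stays $O(1)$ bit because a sender node lies in at most one $T_i$ and a receiver node in at most one $R_i$, so the messages destined along any fixed link — across all instances — concatenate to $O(1)$ bits, and the range stays $2$ since within any single round a node still only needs to split its links into ``gets a $0$'' versus ``gets a $1$''.

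The main obstacle, and the step I would spend the most care on, is the bookkeeping that simultaneously (i) keeps the number of distinct outgoing messages at exactly $2$ rather than letting it creep up when a node participates in the slot-distribution phase for several bits at once, and (ii) verifies that the ``$O(n)$ total bits across $n$ slot nodes'' accounting survives the union over the $k$ disjoint instances — i.e. that $\sum_i |T_i| b_i$ need not be $O(n)$, only each term, because disjointness of the $R_i$'s lets each receiver-side broadcast phase be charged to its own instance. Making the constant number of phases explicit (and checking it does not depend on $k$) is the crux.
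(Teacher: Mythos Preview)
Your proposal is correct and follows essentially the same two-round relay scheme as the paper: spread the $|T|b=O(n)$ bits to distinct helper (``segment'') nodes in Round~1, then have each helper broadcast its single bit to $R$ in Round~2, with range $r=2$ coming simply from the fact that every message is one bit. For the simultaneous case, the paper avoids your detour through disjoint slot blocks or grouped phases entirely: it reuses all of $V$ as segments for every instance at once, observing that Round~1 has no sender conflict because the $T_i$ are disjoint and Round~2 has no per-link conflict because the $R_i$ are disjoint --- exactly the resolution you reach in your parenthetical and final paragraph.
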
 
\begin{proof}
First, we show that it is possible to send messages from $T$ to $R$ in two rounds using one bit per communication link per round, provided $|T| b\leq n$. Let us split nodes $V$ into $t=|T|$ segments $S_1,\ldots,S_{t}$ of size $b$. 
In Round 1, each $v\in T$ sends the $j$th bit ($j\in[b]$) of its message $M^v$ to the $j$
node of its segment. In Round 2, each node $v$ sends the bit received in Round~1 to all nodes from $R$.
    \begin{algorithm}[h]
      \caption{LocalBroadcast$(T, R, b)$} 
      \label{broadcastInFragment}
      \begin{algorithmic}[1]
        \State assign a segment \tj{$S_i$} of nodes of size $b$ to each $v\in T$
        \State Round 1: each node $v$ sends the $j$th bit of $M^v$ to the $j$th node of its segment
        \State Round 2: each node $u$ (from the segment assigned to $v$) sends the bit received in Round~1 to all nodes from $R$
      \end{algorithmic}  
    \end{algorithm}
Algorithm~\ref{broadcastInFragment} solves the problem for one pair $(T, R)$ of transmitters and receivers. However, it is possible to solve it simultaneously for multiple pairs $(T_i, R_i)_{i\in [k]}$ with messages of size $b_i$, as long as $T_i$ are pairwise disjoint, $R_i$ are pairwise disjoint and $|T_i|b_i\in O(n)$.
Observe that all transmitters in Round~1 belong to $T_i$.
As $T_i$'s are pairwise disjoint, Round~1 can be done simultaneously for each $i\in[k]$.
On the other hand, all receivers in Round~2 belong to $R_i$. As $R_i$'s are pairwise disjoint, Round~2 can be done simultaneously for each $i\in[k]$.
Finally, if $|T_i|b_i\leq c\cdot n$ for a constant $c>1$, we can solve the local broadcast problem in $2c$ rounds by repeating Algorithm~\ref{broadcastInFragment} $c$ times. 
	
As transmission at each edge in Round~1 and Round~2 contains $1$ bit, we obtain a solution
with range $2$ and edge capacity $1$ in time $O(1)$.
\end{proof}

\noindent\textit{Remark.}
\tj{Design of algorithms in the unicast congested clique has been recently fostered by the Lenzen's routing lemma \cite{DBLP:journals/corr/abs-1207-1852}. 
For a reader familiar with Lenzen's paper, Proposition~\ref{p:local:broadcast} might seem to be a corollary
from his result.
We remark here that it is not the case, because the
overall size of all copies of a message $M^v$ for $v\in T_i$ is $b_i|R_i|$ which might be $\omega(n)$.
%
\comment{For our purposes, one can 
show the following corollary from this result. Assume that, for each $i\in[n]$, 
the node $v_i$ has messages $M_i^1, \ldots, M_i^n$ such that
$M_i^j$ should be delivered to $v_j$. Then, if the sum of lengths
of messages $|M_i^1|+ \ldots+ |M_i^n|$ is at most $n$ for each $i\in[n]$, all messages can be delivered in $O(1)$ rounds with edge
capacity $1$ and range $r=2$. 
The local broadcast problem not always satisfies the condition that the number of bits
received by a node is at most $n$. Therefore we need a separate lemma.
}
}

Next, we define the \emph{global broadcast problem.}
Assume that each node from a set $S\subseteq V$ of nodes knows (the same) message $M$
of length $b$.
The \emph{global broadcast problem} is to deliver $M$ to each node $v\in V$
of the network.


\begin{proposition}\labell{p:broadcast}
The global broadcast problem can be solved in one round with range $r=1$ and \tj{edge} capacity $\lfloor\frac{b}{|S|}\rfloor$.
\end{proposition}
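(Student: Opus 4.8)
The plan is the natural ``striping'' approach: since $|S|$ different nodes each hold the \emph{entire} message $M$, we can afford to let each of them broadcast only a $\tfrac{1}{|S|}$ fraction of $M$ and have every node reassemble the pieces, all in a single broadcast round.

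First I would fix a canonical order on $S$. Let $s_1<s_2<\cdots<s_{|S|}$ be the nodes of $S$ listed by their IDs; since $S$ (like $T,R$ in the local broadcast problem) is part of the data known to every node, each node can compute this order locally, and in particular each $s_i$ knows its own index $i$. Next, split $M$ into consecutive blocks $M_1,\ldots,M_{|S|}$ whose lengths are as equal as possible (each of length at most $\lceil b/|S|\rceil$, with the final blocks possibly empty); this partition depends only on $b$ and $|S|$ and hence is also known to every node. In the one communication round, node $s_i$ sends $M_i$ on all of its outgoing links — the same string on every link, so the range is $r=1$ — while nodes outside $S$ send nothing. Every node $v\in V$ then receives $M_1,\ldots,M_{|S|}$ and concatenates them to recover $M$, which proves correctness.

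For the complexity bounds: one round is used by construction, the range is $1$ as noted, and the longest message transmitted has length at most $\lceil b/|S|\rceil$. When $|S|$ divides $b$ this is exactly $\lfloor b/|S|\rfloor$, matching the statement; in general one pads $M$ up to the next multiple of $|S|$ before splitting, so the bound holds in the form stated for that case and up to an additive $1$ otherwise.

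There is no substantive obstacle here — the argument is essentially a counting/bookkeeping exercise. The only points that need care are (i) that the data $S$ and $b$ are common knowledge, so the block-to-sender assignment agreed upon by the senders is the one the receivers use to reassemble, and (ii) the rounding in the block lengths, which is the sole reason the capacity is stated up to a floor/ceiling discrepancy. Everything else follows by inspection, so I would keep the write-up short.
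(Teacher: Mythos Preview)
Your proposal is correct and follows essentially the same striping approach as the paper: split $M$ into $|S|$ pieces and have the $i$th node of $S$ broadcast the $i$th piece in a single range-$1$ round. Your write-up is in fact more careful than the paper's own proof about the ordering of $S$, the common-knowledge assumption, and the floor/ceiling discrepancy.
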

\begin{proof}
The problem can be solved by splitting the common message into $\lfloor\frac{b}{|S|}\rfloor$ parts assigning the $i$th part to the $i$th element of $X$ for $i\in[\lfloor\frac{b}{|S|}\rfloor]$.
Then, each $v\in X$ broadcasts its part to the whole network.
\end{proof}

	\section{Connected components in the broadcast congested clique}
	This section is devoted to the broadcast congested clique, the weakest variant of the 
congested clique model.
%
First, we recall a distributed
implementation of the well known Boruvka's algorithm for MST. 
Then, we design a new algorithm for connectivity
which (unexpectedly?) shows that the $\log n$ bound on round
complexity can be broken in the broadcast congested clique.

    \subsection{Minimum spanning forest in broadcast congested clique} 
    Minimum spanning forest can be computed using a distributed version of the classical Boruvka's algorithm. The algorithm works in \emph{phases}. At the beginning of phase $i$ a partition
		$\mathcal{F}$ into fragments of size $\ge 2^i$ is given.
    During phase $i$ 
new fragments of size $\ge 2^{i+1}$ are determined, based on the lightest edges incident to all
fragments.

    \par In the distributed implementation of the Boruvka's algorithm each node knows the set of fragments at the beginning of a phase. During the phase each node $v$ announces (broadcasts) the lightest edge connecting $v$ with a node $u \not \in F^v$. 
		Using those edges, each node can individually (locally) perform the next phase of
		the Boruvka's algorithm and determine \tj{new (larger) fragments.}
    \begin{theorem}
      \label{BroadcastCC}
      Boruvka's algorithm 
			can be implemented in $O(\log n)$ rounds in $\rcast(n,1)$.
    \end{theorem}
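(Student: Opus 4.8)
The plan is to verify that the distributed Boruvka procedure described above maintains the following invariant: at the start of phase $i$, every node holds the same partition $\mathbb{F}$ of $G$ into fragments, and every growable fragment in $\mathbb{F}$ has at least $2^i$ nodes. The base case $i=0$ is immediate --- the partition into singletons is known to all and each singleton has $2^0=1$ node. For the inductive step I would establish two claims: that a single broadcast round implements one Boruvka merging step, and that this step restores the invariant with $i$ replaced by $i+1$.

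For the communication claim, note that in phase $i$ each node $v$ broadcasts one message encoding the lightest edge $(v,u)$ incident to $F^v$ (that is, with $F^u\neq F^v$) among $v$'s own incident edges, or a dedicated symbol if no such edge exists. This message consists of two IDs from $[n]$ and a weight of $O(\log n)$ bits, so it is a legal single broadcast in $\rcast(n,1)$. Since every node knows $\mathbb{F}$, after the round each node computes, for every fragment $F$, the globally lightest edge incident to $F$ by taking the minimum over the announcements of the $v\in F$, with ties broken by a fixed deterministic rule (e.g.\ lexicographically on endpoint IDs). The new partition is obtained by contracting each fragment together with its selected edge; this is exactly a Boruvka step, so the selected edges lie in an MSF of $G$ and the contracted pieces are again fragments, and all nodes produce the same new $\mathbb{F}$ and the same accumulated edge set because they start from identical data and apply a deterministic rule.

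For the size bound, observe that every growable fragment selects an edge joining it to a different fragment, so in the auxiliary graph on fragments each growable fragment has degree at least one; hence every nontrivial component of this auxiliary graph contains at least two old fragments, each of size at least $2^i$, and the merged fragment therefore has at least $2\cdot 2^i=2^{i+1}$ nodes. Non-growable fragments are already full connected components of $G$ and never change again. Since no fragment can exceed $n$ nodes, after $O(\log n)$ phases no fragment is growable, i.e.\ $\mathbb{F}$ is the partition into connected components and the accumulated edge set is an MSF, known to every node. As each phase costs one round, the total is $O(\log n)$ rounds.

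The only delicate point --- and the step I expect to need care --- is the consistency and tie-breaking argument: one must ensure that the set of selected lightest incident edges cannot form a cycle and that all nodes, seeing the same broadcasts, perform an identical local contraction. I would handle this in the usual way: by appending endpoint-ID pairs as low-order tie-breakers the edge weights become distinct, the classical argument then shows the selected edges form a forest, and determinism of the aggregation rule applied to identical inputs forces identical outputs at every node.
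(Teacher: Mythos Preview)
Your proof is correct and follows the same approach as the paper: each phase is a single broadcast round in which every node announces its lightest edge leaving its fragment, all nodes then locally perform the Boruvka merge, and the standard doubling argument yields $O(\log n)$ phases. The paper treats this as folklore and gives only the one-paragraph description preceding the theorem (with no formal proof in the compiled text); your version is more explicit about tie-breaking and the consistency of the local computations, but the underlying idea is identical.
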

\comment{    
		\begin{proof}
    \par After $i$th phase, each fragment contains at least $2^i$ nodes, as at the beginning we have fragments of size 1 and in each phase we merge each fragment with at least one other fragment. Therefore, after $O(\log n)$ communication rounds we will have components of size $n$, which means we will have MSF of given graph. 
    Therefore, Theorem \ref{BROADCAST_ALG} is correct.
    \end{proof}
}    
    \subsection{Connected components algorithm}
     To calculate connected components we could use the standard Boruvka's algorithm as well. 
		However, we are not forced to select the lightest edge incident to each component.
		Our general idea is to prefer edges which connect nodes to components of large degree.
		And the intended result of a phase should be that each component either has small degree or it \tj{is connected} to some
		``host'' of large degree (directly or by a path of length larger than one). As the number
		of such ``hosts'' will be relatively small, we obtain significant reduction of the number of
		components of large degree in each phase. Moreover, we separately deal with components of small degree
		by allowing them to broadcast all their neighbours at the final stage of the algorithm.

		More precisely, we define $\deg(v)$ for a vertex $v$ wrt a partition $\mathcal{C}$
		as the number of components connected with $v$, i.e.,	$\deg(v)=|N(v)|$, where
%
		$$N(v)=\{C\in \mathcal{C}\,|\, \exists u \in C \mbox{ such that } (v,u) \in E\mbox{ and }C\neq C^v\}.$$
%
%
		For a component $C\in\mathcal{C}$, $\deg(C)=\max_{v\in C}\{\deg(v)\}$. Note that,
		according to this definition, the degree of a component $C$ might be smaller than
		the actual number of components containing nodes connected by an edge with nodes
		from $C$. Our definition of degree is adjusted to make it possible that
		degrees of components can be determined in $O(1)$ rounds.
				
		The algorithm is parametrized by a natural number $s$ which (intuitively) sets
		the threshold between components of small degree (smaller than $s$) and large degree
		(at least $s$).
		Given a partition $\mathcal{C}$ of the graph \tj{into components (wrt edges known to all nodes)}, we define the linear ordering
		$\succ$ of components, where $C\succ C'$ iff $\deg(C)>\deg(C')$ or $\deg(C)=\deg(C')$ and $\text{ID}(C)>\text{ID}(C')$.
A component $C$ is a \emph{local maximum} if all its neighbors
are smaller with respect to the $\succ$ ordering.		
		
Our algorithm consists of the main part and the \emph{playoff}.
The main part is split into \emph{phases}.
At the beginning of phase $1$ each node is \emph{active}
and it 
forms a separate 
component.
\tj{During an execution of the algorithm, nodes from non growable
components and components of small degree (smaller than $s$)
are \emph{deactivated}.}
At the beginning of a phase, a partition of the graph
of active nodes is known to the whole network.
\tj{First, each node $v$ determines $N(v)$ and announces its degree $\deg(v)$
wrt the current partition of the set of active nodes into components (Round 1). 
With this information, each node $v$ knows the ordering of components of the
graph of active nodes according to $\succ$.
Then, each \emph{active} node $v$ (except of members of local maxima) broadcasts
its incident edge to the largest active component from $N(v)$ according to $\succ$
relation (Round~2). }
Next, each node $v$ of each local maximum $C$ checks whether edges connecting $C$ 
to all components containing neighbors of $v$ \tj{(i.e., to components from $N(v)$)} 
have been already broadcasted. 
If it is not the case, an edge connecting $v$ to a new component $C'$ 
(i.e., to such $C'$ that no edge connecting $C$ and $C'$ was
known before) is
broadcasted by $v$ (Round~3).
Based on broadcasted edges, new components are determined
and their degrees are computed (Round~4). Each new component with degree
smaller than $s$ is \emph{deactivated} at the end of a phase.
%

%
The playoff lasts $s$ rounds in which each node $v$ of each
deactivated component broadcasts edges going to all
components connected to $v$ (there are at most $s$ such components for
each deactivated node).
More precise description of this strategy is presented
as Algorithm~\ref{BRCC}. 
\tj{The key property for an analysis of complexity of our algorithm
is that each active component $C$ of large degree is either connected
during a phase to all its neighbors or to a component which is larger
than $C$ according to $\succ$.
}

\comment{connected during a phase
(directly, or by a path) to an active component of large degree
which is a local maximum. 
This fact guarantees that the
number of active components decreases at least $s$ times
in each phase.
}
		
\comment{
		The goal is to ensure that, in each phase, each 
		
     \par The main idea is to announce edge connecting node to a neighbouring component of the largest degree. After merging, for each new component $C'$ we will know all edges of the old component $C \subset C'$ of the highest degree $d$. Thus, it would be possible to show, that each old component is a part of new component consisting of at least $d+1$ old components. Therefore, it would be possible to show that the number of components decreased $d+1$ times.
     \par There are two problems we need to fix - the first is to guarantee, that $d$ is large, the second is, that some of neighbours of $C$ could not announced edge incident to $C$. 
     \par The first problem we can solve by disabling components which have degree so small, that it would be possible to announce all outgoing edges without increasing asymptotic number of rounds.
     \par The second problem we solve by announcing edge from $C$ to a neighbour which is connected to a component with higher degree. As neighbour did not announced edge incident to $C$, it had to have some other neighbouring component of higher degree. 
}     
\comment{     \begin{definition}
     $ $
     \begin{itemize}
      \item let us denote by $\deg(v)$ the number of components $C$ such that $\exists u \in C$ such that $(v,u) \in E$
      \item degree of a component is maximum over degrees of nodes forming this component
      \item let us denote by $N(v)$ set of neighbouring components: $\forall C \in N(v) \exists u \in C$ such that $(v,u) \in E$
     \end{itemize}
     \end{definition}
	}
    \begin{algorithm}
      \caption{BroadcastCC$(v,s)$\Comment{$s$ is the threshold between small/large degree}}
      \label{BRCC}
      \begin{algorithmic}[1]
        \While{there are active components} \Comment{execution at a node $v$} 
          \State \textbf{Round~1:} $v$ broadcasts $\deg(v)$
				\If{$\deg(v)>0$}
					\State $C_{\text{max}}(v)\gets$ the largest element of $N(v)$ wrt the ordering $\succ$
					\State \textbf{Round~2:} 
					\State \label{line:b1}\textbf{if} $C^v$ is not a local maximum \textbf{then} $v$ broadcast an edge $(u,v)$ such that $u\in C_{\text{max}}$
          \State \textbf{Round~3:}
					\If{$C^v$ is a local maximum}
						\State $N_{\text{lost}}(v)\gets\{C\,|\, C\in N(v)\mbox{ and no edge connecting }C\mbox{ and } C^v \mbox{ was broadcasted}\}$
						\If{$N_{\text{lost}}(v)\neq\emptyset$}
							\State $u\gets$ a neighbor of $v$ such that $u\in C$ for some $C\in N_{\text{lost}}(v)$
							\State \label{line:b2}$v$ broadcasts an edge $(u,v)$
						\EndIf
					\EndIf
				\EndIf
          \State {$v$ computes the new partition into components, using broadcasted edges}
          \State \textbf{Round~4:} $v$ broadcasts $\deg(v)$\Comment{degrees wrt the new components!}
					\State \textbf{if} $\deg(C^v)< s$ \textbf{then} deactivate $v$
        \EndWhile
        \State \textbf{Playoff ($s$ rounds)}: deactivated nodes broadcast edges to neighboring components.
      \end{algorithmic}
    \end{algorithm}

\comment{
    \begin{algorithm}
      \caption{BroadcastCC$(v,s)$\Comment{$s$ is the threshold between small/large degree}}
      \label{BRCC}
      \begin{algorithmic}[1]
        \While{there are active components} \Comment{execution at a node $v$} 
          \State \textbf{$v$ broadcasts $\deg(v)$}
				\If{$\deg(v)>0$}
					\State $C_{\text{max}}(v)\gets$ the largest element of $N(v)$ wrt the ordering $\succ$
					\State \label{line:b1}\textbf{if} $C^v$ is not a local maximum \textbf{then} $v$ broadcast an edge $(u,v)$ such that $u\in C_{\text{max}}$
          \If{$C^v$ is a local maximum}
						\State $N_{\text{lost}}(v)\gets\{C\,|\, C\in N(v)\mbox{ and no edge connecting }C\mbox{ and } C^v \mbox{ was broadcasted}\}$
						\If{$N_{\text{lost}}(v)\neq\emptyset$}
							\State $u\gets$ a neighbor of $v$ such that $u\in C$ for some $C\in N_{\text{lost}}(v)$
							\State \label{line:b2}$v$ broadcasts an edge $(u,v)$
						\EndIf
					\EndIf
				\EndIf
          \State \textbf{each node computes the new partition into components, using broadcasted edges}
          \State $v$ broadcasts $\deg(v)$\Comment{degrees wrt the new components!}
					\State \textbf{if} $\deg(C^v)< s$ \textbf{then} deactivate $v$
        \EndWhile
        \State \textbf{Playoff ($s$ rounds)}: deactivated nodes broadcast edges to neighboring components.
      \end{algorithmic}
    \end{algorithm}
}

\comment{     
    \begin{algorithm}
      \caption{BroadcastCC}
      \label{BRCC}
      \begin{algorithmic}[1]
        \While{$E \neq \emptyset$}
          \State \textbf{each node $v$ announces $\deg(v)$}
          \State \textbf{each node $v$ announces edge to a component $c = \argmax\limits_{x \in N(v)}\big( \max\limits_{v\in x} \deg(v) \big) \Leftrightarrow \deg(v) \leq \max\limits_{u\in c} \deg(u)$}
          \State \textbf{let $N_{\mathit{old}}(v) \gets N(v)$}
          \State \textbf{let $\deg_{\mathit{old}}(v) \gets \deg(v)$}
          \State \textbf{each node executes step of Boruvka's algorithm in local memory}
          \State \textbf{each node $v$ announces edge to a component $X = \argmax\limits_{x \in N_{\mathit{old}}(v)}\big( \max\limits_{v\in x} \deg_{\mathit{old}}(v) \big)$ s.t. $\forall u \in X\  C^u \neq C^v$}
          \If{$\max\limits_{v\in C^v} \deg_{\mathit{old}}(v)  < s$} {\textbf{disable $v$}}
          \EndIf
          \State \textbf{$U \gets$ the set of the original edges between nodes in one component}
          \State \textbf{$U' \gets$ the set of the original edges to disabled nodes}
          \State{$E \gets E \setminus U$}
          \State{$E \gets E \setminus U'$}
          \State
        \EndWhile
        \State{ \textbf{deactivated nodes announce all remaining edges}}
      \end{algorithmic}
    \end{algorithm}
	}
    \begin{theorem} 
    \label{BRCCTheorem}
    Algorithm \ref{BRCC} solves the spanning forest problem in $O(s + \log_s n)$ rounds for an $n$-node graph.
    \end{theorem}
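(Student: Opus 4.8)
The plan is to separate the argument into a round count and a correctness proof, both reducing to one structural statement about a single phase. A \emph{phase} is one iteration of the \textbf{while}-loop: the four broadcast rounds (announce $\deg(v)$; broadcast one incident edge if $C^v$ is not a local maximum; broadcast one incident edge if $C^v$ is a local maximum; re-announce the new $\deg(v)$), while everything else — computing $N(v)$, the order $\succ$, the local maxima, and the new partition — is local and costs no communication. Hence a phase takes $O(1)$ rounds and the playoff takes exactly $s$ rounds, so it remains to show the \textbf{while}-loop runs for $O(\log_s n)$ phases. I will in fact show that the number of active components shrinks by a factor of at least $s+1$ per phase; since this count starts at $n$ (the singleton partition) and must reach $0$ for the loop to end, this bounds the number of phases by $\lceil\log_{s+1}n\rceil=O(\log_s n)$, which yields the claimed $O(s+\log_s n)$.

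For the contraction, fix a phase with $m$ active components at its start and let $D$ be any new component that is not deactivated, so in particular $\deg(D)\ge s$. First, $D$ contains an old component of large degree: picking $v\in D$ with $\deg(v)\ge s$ in the new partition, $v$ is there adjacent to at least $s$ new components other than $D$, and these contribute at least $s$ distinct old components adjacent to $v$ (one from each), all different from $C^v$; hence already in the old partition $\deg(v)\ge s$, so $\deg(C^v)\ge s$ there and $C^v\subseteq D$. Now apply the key property to $C^v$: during the phase it is either connected to all of its neighbouring components (of which there are at least $\deg(C^v)\ge s$) or connected to a strictly $\succ$-larger component, which is then itself of large degree since $\succ$ refines ``degree $\ge s$''. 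Iterating the second alternative along the finite total order $\succ$, $D$ must contain an old component $\widehat C$ connected during the phase to all components adjacent to it, so $D$ contains $\widehat C$ together with its at least $\deg(\widehat C)\ge s$ old neighbours, i.e.\ at least $s+1$ old components. Distinct surviving new components are vertex-disjoint and therefore use disjoint families of old components, so there are at most $m/(s+1)$ of them, as wanted.

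For correctness, let $E^\ast\subseteq E$ collect every edge ever broadcast (in the phases and in the playoff); I claim $(V,E^\ast)$ has exactly the connected components of $G$, and then each node — which observes every broadcast and hence knows $E^\ast$ — outputs those components together with any spanning forest of $(V,E^\ast)$, which is a spanning forest of $G$. One direction is immediate from $E^\ast\subseteq E$. For the other, every component of the final partition is connected within $E^\ast$, having been assembled from broadcast edges. Now take $(x,y)\in E$ joining two distinct final components; all components are deactivated at the end and $x$'s is growable (the edge $(x,y)$ leaves it), so it was deactivated by the rule $\deg(C^x)<s$, meaning $x$ had fewer than $s$ neighbouring components at that moment, and in the $s$-round playoff it broadcasts one incident edge to each of them, in particular one reaching the then-component of $y$; that edge lies in $E^\ast$ and joins the final components of $x$ and $y$. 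Walking along a $G$-path and using, for each of its edges, either a final component's spanning tree (when the edge stays inside one) or such a crossing playoff edge (when it does not), we connect any two vertices of one $G$-component within $E^\ast$, which completes the proof.

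The real work is the key property itself — that within one phase each large-degree active component is either absorbed together with all of its neighbours or merged with a strictly $\succ$-larger component. I would prove it by cases on whether the component $C$ is a local maximum. If it is not, some node $w\in C$ has a neighbour in a $\succ$-larger component, so $C_{\max}(w)\succ C$ and $w$'s Round~2 broadcast already merges $C$ with that larger component. The delicate case is when $C$ is a local maximum, for then ``merged with a larger component'' can only occur through a path: for each neighbour $C'$ one must follow the Round~2 edge chosen by a node of $C'$ adjacent to $C$ and the Round~3 repair edge issued by a node of $C$, and argue that $C'$ is either attached directly to $C$ or dragged, via some $\succ$-larger component, into the same new component. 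Reconciling the ``one repair edge per node'' budget of Round~3 with the ordering $\succ$ in this case is the step I expect to be the main obstacle.
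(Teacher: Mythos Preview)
Your overall approach matches the paper's: both reduce to showing that each surviving new component absorbs at least $s+1$ old ones by following an ascending $\succ$-chain to a component whose neighbours are all attached. The paper packages this via an auxiliary directed acyclic graph $G_{\text{phase}}$ on old components (with edges induced by Round~2 broadcasts and, for local maxima, two-step paths through Round~3 then Round~2), and shows every sink has in-degree at least $\deg(C)$; your chain argument is the same path-to-sink reasoning in different clothing.

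The gap you flag as ``the main obstacle'' is not an obstacle at all; you are making the local-maximum case harder than it is. You try to argue that \emph{every} neighbour $C'$ of a local maximum $C$ ends up in the same new component as $C$, which is not needed and indeed can fail (one Round~3 edge per node cannot cover all lost neighbours). What you actually need is only the \emph{disjunction} you stated: either all neighbours attach, or $C$ is merged with something $\succ C$. If some neighbour is lost, pick any $v\in C$ with $N_{\text{lost}}(v)\neq\emptyset$; $v$'s Round~3 edge goes to some $u\in C''$ with $C''\in N_{\text{lost}}(v)$. Since $(v,u)\in E$ we have $C\in N(u)$, yet $u$ (whose component $C''\prec C$ is not a local max) did not broadcast to $C$ in Round~2, so $C_{\max}(u)\succ C$; hence $C$, $C''$, $C_{\max}(u)$ all merge and the second alternative holds. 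No ``budget reconciliation'' is needed --- one lost neighbour already forces the second alternative.

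A smaller gap in your correctness argument: degrees are computed with respect to \emph{active} components only, so when $C^x$ is deactivated the then-component of $y$ is counted in $N(x)$ only if it is still active. You must WLOG choose $x$ so that $C^x$ is deactivated no later than $C^y$; then $y$'s component at that moment is active, hence lies in $N(x)$, and the edge $x$ broadcasts in the playoff lands in a subset of the final $C^y$.
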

    \begin{proof}
		First, consider round complexity of the algorithm. It is clear that Playoff has $s$ rounds.
		To show the claimed complexity we show that the number of active components is decreased
		at least $s$ times in each phase. An intuition is that all components join with (some) local maxima
		and thus each local maximum of large degree ``combines'' at least $s$ components
		\tj{in a new, larger component}.
		However, the situation is not that simple, as there might be many local maxima.
		
		In order to formalize the intuition, consider a directed graph $G_{\text{phase}}$ of components active
		at the beginning of a phase, where $(C_1, C_2)$ is an edge 
		in $G_{\text{phase}}$ \textbf{iff} 
		a node from $C_1$ broadcasts
		an edge connecting it with $C_2$ in step~\ref{line:b1} of the phase (edge of type 1) \textbf{or} 
		$C_1$ is a local maximum, a node from $C_1$ broadcasts an edge connecting it with
		some $C'$ in step~\ref{line:b2}, while a node from $C'$ broadcasts an edge connecting it with
		$C_2$ in step~\ref{line:b1} (we call it edge of type 2).
		
		\noindent The algorithm guarantees that 
		\begin{enumerate}[label=(\alph*), noitemsep,topsep=0pt]
		\item $G_{\text{phase}}$ is acyclic. 
		
		Indeed, each edge $(C_1, C_2)$ resulted from broadcasts 
		in step~\ref{line:b1} satisfies $C_1\prec C_2$. Moreover,
		an edge is broadcasted from $C_1$ to $C'$ in step~\ref{line:b2} iff
		all nodes from $C'$ broadcasted connections to components larger than
		$C_1$ wrt  $\succ$ ordering.
	
		\item
		Each connected component $C$ (i.e., each node of $G_{\text{phase}}$)
		is either a sink of $G_{\text{phase}}$ connected with (at least) $\deg(C)$ nodes in $G_{\text{phase}}$ or has out-degree at least one.
		
		This property follows from the fact that only nodes of local maxima are candidates for sinks,
		as only they do not broadcast
		in step~\ref{line:b1}. Moreover, assume that $C$ is a local maximum and there is a neighbor $C'$ of $C$
		whose nodes have not broadcasted connections with $C$ in step~\ref{line:b1}. Then a node(s)
		from $C$ broadcast in step~\ref{line:b2} which implies that out-degree of $C$ is at least one.
		
		\item
		Each connected component of a partition obtained at the end of a
		phase contains at least one sink of $G_{\text{phase}}$.
		
		If one ignores that edges of $G_{\text{phase}}$ are directed then certainly
		new components at the end of the phase correspond to connected
		components of $G_{\text{phase}}$. This follows from the fact that
		edges of $G_{\text{phase}}$ correspond to connections between 
		components (by an edge or a path of two edges \tj{in the original graph}) broadcasted during
		the phase. As $G_{\text{phase}}$ is acyclic, each connected component
		contains a sink.
		\end{enumerate}
 	
		Let $\mathcal{C}$ be a partition into components at the beginning of a phase
		and $\mathcal{C}'$ be the partition into components at the end of that phase,
		before deactivating components of small degree.\footnote{Note that deactivation of components
		of degree $<s$ at the end of a phase does not guarantee that degrees of all components
		are $\ge s$ at the beginning of the next phase. This is caused by the fact that deactivation
		of some components might decrease degrees of components which remain active
		(degrees are calculated only among active nodes).}
		The above observations imply that each component of $\mathcal{C}'$ either
		contains only components of $\mathcal{C}$ of small degree (smaller than $s$) or it contains
		at least $s+1$ components from $\mathcal{C}$. Contrary, assume that a component $C'$
		of $\mathcal{C}'$ contains a component $C\in \mathcal{C}$ of degree $\ge s$, while
		$C'$ contains altogether at most $s$ components of $\mathcal{C}$. Then,
		there is a directed path from $C$ to a sink $C_{\text{sink}}$ of degree at least $\deg(C)\ge s$.
		Property (b) implies that at least $s$ components of $\mathcal{C}$ have edges
		towards $C_{\text{sink}}$ in $G_{\text{phase}}$. This contradicts the contrary assumption
		that $C'$ contains altogether less than $s$ components of $\mathcal{C}$.
				
		Summarizing, assume that we have $p$ active components at the beginning of a phase.
		Then, at the end of the phase, there are at most $p/s$ new components which contain
		at least one component whose degree at the beginning of the phase was $\ge s$.
		It remains to consider the final components of the phase which are composed only from components
		whose degree was $<s$ at the beginning of the stage. However, as the degree of a node
		cannot increase during the algorithm, the degrees of these new components are $<s$
		and they are deactivated at the end of the phase. Thus, each phase decreases the number
		of active components at least $s$ times -- there are at most $\log_s n$ phases.
		
		Correctness of the algorithm follows from the fact that each node of each
		deactivated component
		can broadcast its connections with all other components during Playoff.
		\tj{Moreover, active components are connected subgraphs of $G$ at each stage.}
		
\comment{		
    \par If node is deactivated, its degree must be lesser than $s$. Therefore, announcing edges of inactive components takes $O(s)$ rounds. In order to show that whole algorithm needs $O(s + \log_s n)$ rounds, we will show, that there are at most $O(\log_s n)$ iterations of while loop (called stages). In particular we will show following lemma
    \begin{lemma}
      \label{BRCCLemma}
      Number of components which are not disabled is decreasing at least $s$ times in one stage.
    \end{lemma}
    \begin{proof}
    \par Each node announces edge to a component containing a node of the highest degree. For each component let us consider node $v$ with the highest degree. $C^v$ is a component of node $v$ before stage. After the $3$-rd line there are four cases:
    \begin{enumerate}
    \item all neighbours of $v$ announced edges connecting them to $C^v$ and $deg_{\mathit{old}}(v) \geq s$
    \item not all neighbours of $v$ announced edges connecting them to $C^v$ and $deg_{\mathit{old}}(v) \geq s$
    \item all neighbours of $v$ announced edges connecting them to $C^v$ and $deg_{\mathit{old}}(v) < s$
    \item not all neighbours of $v$ announced edges connecting them to $C^v$ and $deg_{\mathit{old}}(v) < s$
    \end{enumerate}
    
    \par In the first case $v$ is merged with at least $s+1$ other components. 
    \par In the second case, some neighbours did not announced edge connecting it with $C^v$. Thus, it must have a neighbour in component of higher degree. Edge announced in the line 7 will connect $v$ with such neighbour, thus with component of higher degree. If node of higher degree in this component also was in case $2$ we can repeat this reasoning. At the end we will end up in some node in case 1, therefore all old components are now part of larger new component composed from at least $s+1$ old components.
    \par In the third case all nodes in component in new component of $v$ had non larger degree than $v$, thus whole component will be disabled in the $8$-th line.
    \par In the fourth case we can repeat reasoning from case two, but at the end we will end up either in case 1 or case 3. Thus, $C^v$ will be disabled or connected with at least $(s+1)$ other components. 
    \par Therefore all components were disabled or connected with $s+1$ other components, thus number of non deactivated components decreased at least $(s+1)$ times. Therefore lemma \ref{BRCCLemma} is correct.
    \end{proof}
    After $O(\log_s n)$ stages number of active components will drop to $0$, which implies, that $E$ will be empty. Each phase required a constant round number, thus so far algorithm required $O(\log_s n)$ rounds. $E = \emptyset$ implies, that we will end while loop and move to announcing edges from disabled nodes. This part requires $O(s)$ rounds. Therefore theorem \ref{BRCCTheorem} is correct.
		}
    \end{proof}
The minimum of $s + \log_s n$ is obtained for $s=\frac{\log n}{\log \log n}$.
Then, Algorithm \ref{BRCC} works in $O(\log n/\log\log n)$ rounds. 
    \begin{corollary}
    It is possible to solve the connected components problem in the broadcast congested clique in $O\left(\frac{\log n}{\log \log n}\right)$ rounds.
    \end{corollary}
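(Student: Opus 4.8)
The plan is to obtain the corollary directly from Theorem~\ref{BRCCTheorem} by optimizing the free parameter $s$. Theorem~\ref{BRCCTheorem} asserts that Algorithm~\ref{BRCC} computes a spanning forest of the input graph in $O(s+\log_s n)$ rounds, and a spanning forest immediately reveals the partition of $V$ into connected components (every node learns the broadcasted tree edges, hence the components they induce), so it suffices to minimize $s+\log_s n$. Writing $\log_s n=\frac{\log n}{\log s}$, I would set $s=\left\lceil \frac{\log n}{\log\log n}\right\rceil$. Then $\log s=\log\log n-\log\log\log n+O(1)=\Theta(\log\log n)$, so $\log_s n=\frac{\log n}{\log s}=O\!\left(\frac{\log n}{\log\log n}\right)$, while also $s=O\!\left(\frac{\log n}{\log\log n}\right)$; summing the two terms yields the claimed $O\!\left(\frac{\log n}{\log\log n}\right)$ bound. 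One checks that this choice is a legal parameter: $s$ is a natural number and $s\le n$ for all sufficiently large $n$ (for the finitely many small $n$ the bound is vacuous). It is worth noting that $s=\frac{\log n}{\log\log n}$ is, up to constants, the minimizer of $s+\log_s n$, so no stronger bound follows from Theorem~\ref{BRCCTheorem} alone.

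Next I would verify that the entire computation takes place in $\rcast(n,1)$ with bandwidth $\log n$: in every round of Algorithm~\ref{BRCC} each node sends the same message to all other nodes --- namely a value $\deg(v)\in[n]$ or a single incident edge, each encodable in $O(\log n)$ bits --- so it is a genuine broadcast algorithm of edge capacity $O(\log n)$ per round, and the round count of Theorem~\ref{BRCCTheorem} applies verbatim in the broadcast congested clique $\rcast(n,1)=\rcast(n,1,\log n)$.

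There is essentially no obstacle here, since all the content has already been discharged in the proof of Theorem~\ref{BRCCTheorem}; the corollary is just the routine step of balancing $s$ against $\log_s n$. The only point deserving a line of care is the asymptotics of $\log\!\left(\frac{\log n}{\log\log n}\right)=\Theta(\log\log n)$, which holds because the subtracted term $\log\log\log n$ is of lower order than $\log\log n$.
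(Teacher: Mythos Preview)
Your proof is correct and follows exactly the paper's approach: the paper simply states that the minimum of $s+\log_s n$ is attained at $s=\frac{\log n}{\log\log n}$, yielding the $O(\log n/\log\log n)$ bound. Your additional remarks about the algorithm being a genuine broadcast protocol and the asymptotics of $\log s$ are correct elaborations, but the essential content is the same one-line optimization.
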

    %
\tj{Now, consider the model in which the maximum size of a message (bandwidth) is larger than $\log n$. If $s=d$ in Algorithm \ref{BRCC}, we get $\log_d n$ phases, each requiring $O(\log n)$ bits per node. 
%
	Edges from deactivated nodes are broadcasted during Playoff in one round, using
	$O(d\log n)$ bits. 
	This gives $O(\log_d n)$ round algorithm using $O(\log n(d + \frac{\log n}{\log d}))$ bits per node during the whole execution.}
    \begin{corollary}
    It is possible to solve connectivity problem 
		in the broadcast congested clique with bandwidth $d \log n$ 
		using 
				$\log_d n$ rounds and $O(\log n(d + \frac{\log n}{\log d}))$ bits \tj{transmitted
		by each} node. 
    \end{corollary}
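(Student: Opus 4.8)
The plan is to instantiate Algorithm~\ref{BRCC} with threshold parameter $s=d$ and re-examine its round and bit complexity under the increased bandwidth $b=d\log n$, reusing the correctness and structural analysis from Theorem~\ref{BRCCTheorem} essentially verbatim. The point is that a larger bandwidth lets us (i) keep the same number of phases, since the phase count depends only on which edges are broadcast, and (ii) compress the Playoff from $s$ rounds into a single round.

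First I would count the main loop. The structural claims (a)--(c) in the proof of Theorem~\ref{BRCCTheorem} concern only which edges are broadcast, not how many bits fit in a round, so with $s=d$ there are at most $\log_d n$ phases, each consisting of a constant number of rounds. Within one phase a node $v$ transmits only: its degree in Round~1, at most one incident edge in Round~2, at most one incident edge in Round~3, and its updated degree in Round~4 — altogether $O(\log n)$ bits. Hence the main loop costs $O(\log_d n)$ rounds and $O(\log n\cdot\log_d n)=O\!\left(\frac{\log^2 n}{\log d}\right)$ bits transmitted by each node.

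Next I would handle the Playoff. A node is deactivated only when it lies in a component of degree smaller than $s=d$, so it is incident to at most $d-1$ other components and needs to broadcast at most $d-1$ edges, i.e.\ $O(d\log n)$ bits in total. With bandwidth $d\log n$ these all fit into a \emph{single} round, instead of the $s$ rounds required when $b=\log n$. Thus the Playoff contributes $1$ round and $O(d\log n)$ bits per node. Adding the two parts yields $O(\log_d n)$ rounds and $O\!\left(\log n\bigl(d+\tfrac{\log n}{\log d}\bigr)\right)$ bits transmitted by each node, as claimed.

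There is essentially no obstacle here beyond bookkeeping. The only point needing a moment's care is checking that batching all of a deactivated node's Playoff broadcasts into one round does not exceed the bandwidth bound — this holds precisely because the deactivation threshold $s$ was chosen equal to $d$, so the at-most-$(d-1)$ edges amount to $O(d\log n)=O(b)$ bits. Correctness is inherited unchanged from Theorem~\ref{BRCCTheorem}, since the set of edges broadcast over the whole execution, and hence the final partition into connected components, is identical; only the schedule (how the broadcasts are packed into rounds) changes.
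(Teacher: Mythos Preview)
Your proposal is correct and follows essentially the same approach as the paper: set $s=d$, observe that the main loop has $O(\log_d n)$ phases each costing $O(\log n)$ bits per node, and pack the Playoff into a single round of $O(d\log n)$ bits using the larger bandwidth. The paper's own justification is a terse paragraph to the same effect; your version simply spells out the bit-count per phase and the bandwidth check for the Playoff more explicitly.
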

    \tj{
%
		The above corollary gives an improvement over a result from \cite{MT2016}, where the total
		number of bits per node is $O(d\frac{\log^2 n}{\log d})$ in $O(\log_d n)$ rounds.
Moreover, our algorithm is simpler than that in \cite{MT2016}, since it does not require number 
theoretic techniques as $d$-pruning and deterministic sparse linear sketches.
    }

	\section{Deterministic rcast algorithm for minimum spanning forest}
	In this section we provide a deterministic algorithm for minimum spanning forest (MSF) in the rcast model. First, we describe a generic algorithm for minimum spanning tree from \cite{Lotker:2003:MCO:777412.777428}. Then we provide a new efficient $\rcast(n,2)$ version of this
general algorithm. Finally, an algorithm optimizing the range $r$ and
achieving asymptotically optimal edge capacity is presented. 

\subsection{Generic MSF Algorithm}
First, we introduce terminology useful in describing (distributed) algorithms for MSF.

For a graph $G(V,E,w)$ and its partition into fragments, we say that an edge $e = (v,u)$ is \emph{relevant} for a set $A\subseteq V$ if $F^v\neq F^u$ and $e$ is the lightest edge connecting a node from $A$ and a node from fragment $F^u$.
Let $E_{A,\mu}$ denote the set of $\mu$ lightest \tj{relevant} edges incident to the set $A\subset V$.
Moreover, $\cN_{F,\mu}$ for a fragment $F$ denotes the set of fragments connected with $F$ by edges from
$E_{F,\mu}$.

\comment{
\begin{definition}
$ $\begin{itemize}[noitemsep,topsep=0pt]
    \item fragment / component is growable if it has edges connecting it with some other fragment / component
    \item for a given graph $G$ and its partition into fragments we will say, that for a set $V$ an edge $e = (v,u)$ is \textbf{relevant} if it is the lightest edge connecting a node from set $V$ and fragment $F_u$
    \item let us denote set of $\mu$ lightest relevant edges for a set $V$  by $E_{V,\mu}$
    \item let us denote set of components connected to a component $C$ via edges from $E_{C,\mu}$ by $\cN_{C,\mu}$ 
  \end{itemize}
\end{definition}
}

Below, we give a lemma which is crucial for the first efficient unicast congested clique
algorithm for MSF \cite{Lotker:2003:MCO:777412.777428}.
\begin{lemma}
\labell{lotkerLemma}
\cite{Lotker:2003:MCO:777412.777428} 
Let $\mathcal{F}$ be a partition of a graph $G(V,E)$ into fragments,
let $E_{\mathcal{F}}$ be the set of edges in the trees of the partition $\mathcal{F}$.
Then, for each $\mu>0$, the minimum spanning forest $\mathcal{F}'$ of $G(V,E_{\mathcal{F}}\cup\bigcup_{F\in\mathcal{F}}E_{F,\mu})$
is a partition of $G(V,E)$ into fragments, such that the size of each growable tree
of $\mathcal{F}'$ is at least 
$(\mu+1) \min\limits_{F \in \mathbb{F}} |F|$.
%
\end{lemma}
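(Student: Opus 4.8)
The plan is to argue separately that $\mathcal{F}'$ refines the old partition in the expected way and that every growable tree of $\mathcal{F}'$ absorbs at least $\mu+1$ old fragments. First I would recall the standard fact about Boruvka-type contractions: since $E_{\mathcal{F}}$ already contains a spanning tree of each old fragment $F$, the minimum spanning forest $\mathcal{F}'$ of $G(V, E_{\mathcal{F}}\cup\bigcup_{F}E_{F,\mu})$ must contain all of $E_{\mathcal{F}}$ (each such edge is the unique lightest edge across some cut separating its endpoints inside $F$, or more simply, contracting each $F$ preserves the MSF structure by the cut property). Hence each tree of $\mathcal{F}'$ is a union of old fragments together with some of the added ``relevant'' edges, so $\mathcal{F}'$ is indeed a partition of $G$ into fragments and it is a coarsening of $\mathcal{F}$.

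Next I would set up the contracted multigraph $H$ obtained from $G$ by contracting each old fragment $F\in\mathcal{F}$ to a single vertex, keeping only the edges $\bigcup_{F}E_{F,\mu}$; the trees of $\mathcal{F}'$ then correspond exactly to the connected components of the minimum spanning forest of $H$, i.e.\ to the connected components of $H$ itself (every edge retained is an MSF edge since it is a lightest relevant edge for its fragment-endpoint). So it suffices to show that every non-isolated vertex of $H$ lies in a component of size $\ge \mu+1$. Fix such a vertex, i.e.\ a growable old fragment $F$. By definition $E_{F,\mu}$ is the set of the $\mu$ lightest relevant edges incident to $F$, where ``relevant for $F$'' means: lightest edge from $F$ to a given other fragment. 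If $F$ is adjacent (in $G$, across fragments) to at least $\mu$ other fragments, then $E_{F,\mu}$ contains one edge to each of $\mu$ distinct fragments, so in $H$ the vertex $F$ has $H$-degree $\ge\mu$, giving a component of size $\ge\mu+1$. If $F$ is adjacent to fewer than $\mu$ other fragments, say to $t<\mu$ of them, then $E_{F,\mu}$ contains the lightest edge to each of these $t$ fragments (all of $F$'s relevant edges), so again $F$ is connected in $H$ to all $t$ of its fragment-neighbors; but then I must still reach size $\ge\mu+1$, which does not follow from $F$ alone — this is where the argument about component sizes in the original partition enters.

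The main obstacle, and the heart of the lemma, is precisely this last point: a growable tree $T$ of $\mathcal{F}'$ with few old-fragment neighbors must nonetheless be large, and the only available resource is that $T$ is growable, i.e.\ it still has an edge leaving it in $G$. I would handle it by a counting/extremal argument on the component $T$ of $H$: let $T$ consist of $k$ old fragments. Since $T$ is growable there is an old fragment $F_0\in T$ with an edge in $G$ to an old fragment outside $T$; that edge is relevant for $F_0$ but was \emph{not} selected into $E_{F_0,\mu}$ (otherwise that outside fragment would be in $T$). Hence $E_{F_0,\mu}$ already contains $\mu$ relevant edges all of which are \emph{lighter} than this escaping edge, and each goes to a distinct fragment, all lying inside $T$; therefore $F_0$ is $H$-adjacent to $\mu$ distinct fragments of $T$, so $k\ge \mu+1$. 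Combining the two cases (many neighbors vs.\ growable-but-not-all-selected), every growable tree of $\mathcal{F}'$ contains at least $\mu+1$ old fragments. Since each old fragment has size at least $\min_{F\in\mathbb{F}}|F|$, the size of each growable tree of $\mathcal{F}'$ is at least $(\mu+1)\min_{F\in\mathbb{F}}|F|$, as claimed. Finally I would double-check the degenerate cases (a fragment with no relevant edges is non-growable and forms its own component, consistent with the statement only constraining growable trees; ties in edge weights broken consistently so that ``lightest'' is well defined and the cut property applies cleanly).
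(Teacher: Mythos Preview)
The paper does not prove this lemma; it is quoted from Lotker et al.\ \cite{Lotker:2003:MCO:777412.777428} and used as a black box, so there is no in-paper argument to compare your plan against.

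Your argument for the size bound is correct and is essentially the standard one: in the contracted graph $H$ a growable new tree $T$ contains some old fragment $F_0$ with a $G$-edge to a fragment outside $T$; the corresponding \emph{relevant} edge (lightest $F_0$-to-that-fragment edge) cannot lie in $E_{F_0,\mu}$ (else that outside fragment would be in $T$), so $E_{F_0,\mu}$ consists of $\mu$ relevant edges to $\mu$ distinct fragments, all inside $T$, giving $|T|\ge(\mu+1)\min_F|F|$. One cosmetic fix: you should take the \emph{relevant} edge to the outside fragment, not an arbitrary one, before asserting it is excluded from $E_{F_0,\mu}$. Your earlier two-case split (``$F$ has $\ge\mu$ fragment-neighbors'' vs.\ not) is a detour; the growability argument subsumes both cases.

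There is a genuine gap, though, in the first half of the plan. From $E_{\mathcal F}\subseteq\mathcal F'$ you correctly conclude that $\mathcal F'$ is a coarsening of $\mathcal F$, but the sentence ``so $\mathcal F'$ is indeed a partition of $G$ into fragments'' does not follow from what you wrote. Being a partition into fragments means every tree of $\mathcal F'$ is a subtree of the MSF of $G$, i.e.\ every inter-fragment edge selected by the MSF of $G'$ is also an MSF edge of $G$. In general, for a subgraph $G'\subseteq G$ one does \emph{not} have $\mathrm{MSF}(G')\subseteq\mathrm{MSF}(G)$; here it holds only because of the special structure of $\bigcup_F E_{F,\mu}$ (each missing relevant edge is witnessed by $\mu$ lighter relevant edges at both endpoints, which lets one reroute any MSF$(G)$ path inside $G'$). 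This step needs an actual argument---it is precisely what Lotker et al.\ prove---and your appeal to ``the cut property'' and ``contracting preserves MSF structure'' only establishes the coarsening, not the fragment property.
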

In other words, the above lemma says that, in order to increase the size of fragments
$\mu+1$ times, it is sufficient to consider $\mu$ lightest relevant edges \tj{for}
each fragment.

Using Lemma~\ref{lotkerLemma}, one can build MSF in \emph{phases} using
the idea described in Algorithm~\ref{RCAST_MSF}.
First, let us fix
a sequence $\mu_1,\mu_2,\ldots$ of natural numbers.
Phase $i$ starts from a partition of the input graph into fragments and
ends with a new partition into larger fragments.
Before the first phase, each node is considered as a separate fragment. 
At the beginning of the $i$th phase, the set $E_{F,\mu_i}$ of $\mu_i$ lightest
relevant edges (or all relevant edges, if there are at most $\mu_i$)
is determined for each fragment $F$ of the current partition. 
Then, this information is broadcasted to all nodes of the network.
Using Lemma~\ref{lotkerLemma}, each node can compute a new partition
into fragments such that the size of the smalles growable fragment
is increased at least $\mu_i +1$ times.

Lotker et al.\ designed a congested clique implementation of Algorithm~\ref{RCAST_MSF}
which guarantees that each phase works in $O(1)$ rounds for the sequence
$\mu_1=1$ and $\mu_{i}=\mu_{i-1}(\mu_{i-1}+1)$ for $i>1$. As $\mu_k\geq n$
for $k=O(\log\log n)$, their algorithm works in $O(\log\log n)$ rounds.


    \begin{algorithm}
      \caption{Minimum Spanning Forest}
      \label{RCAST_MSF}
      \begin{algorithmic}[1]
        \State{$i \gets 1$}
				\State $\mathbb{F}=\{\{v_1\}, \{v_2\},\ldots,\{v_n\}\}$
        \While{$E \neq \emptyset$}
          \State \label{gen:s1} SelectEdges($\mu_i,\mathbb{F}$)
          \State \label{gen:s2} announce edges from $E_{F,\mu_i}$
          \State \label{gen:s3} locally merge fragments, modify $\mathbb{F}$ appropriately
          \State{$E \gets E \setminus \{(u,v) | F^v = F^u \} $}
          \State{$i \gets i+1$}
        \EndWhile
      \end{algorithmic}
    \end{algorithm}
In order to illustrate problems with design of algorithms with limited
range and edge capacity, we first shortly describe the $O(\log\log n)$
solution for MSF from \cite{Lotker:2003:MCO:777412.777428}.

The selection of $\mu_i$ lightest edges incident to each fragment in
a phase (step \ref{gen:s1} of Alg.~\ref{RCAST_MSF})
is done after one round of communication as follows. For each node $v$ and each fragment $F\neq F_v$,
$v$ sends the lightest edge from the set $\{(v,u)\,|\, u\in F\}$ to all
nodes from $F$. Thus, the edge capacity $\Theta(\log n)$ is needed
in each phase. The upper bound on the range is equal to the number of components
which might be $\Omega(n/\mu_i)$ in phase $i$.

After the above described round, each node $v$ \tj{knows $E_{F^v,\mu}$,} the set of all relevant edges
incident to its fragment. Thus, the set $E_{F,\mu_i}$ is computed
individually (and locally) by each node of $F$ (for each fragment $F$).
The choice of the sequence $\mu_i$ guarantees that each growable
fragment has at least $\mu_i$ elements in phase $i$. Therefore,
$E_{F,\mu_i}$ might be broadcasted to the whole network 
(step \ref{gen:s2} of Alg.~\ref{RCAST_MSF})
in one round such that
each node of $F$ broadcasts one element of $E_{F,\mu_i}$.
The range $r$ is equal to $1$ in this round, while the
edge capacity is $\Theta(\log n)$.
%


Below, we summarize properties satisfied by MSF algorithm from \cite{Lotker:2003:MCO:777412.777428}.
\begin{corollary}
\label{t:lotker:algorithm}
There exists a deterministic congested clique MSF algorithm which
works in $O(\log\log n)$ rounds with range $r=O(n)$ and 
edge capacity $O(\log n\log\log n)$.
\end{corollary}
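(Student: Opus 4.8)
The plan is to instantiate the generic scheme of Algorithm~\ref{RCAST_MSF} with the doubly-exponential sequence $\mu_1 = 1$ and $\mu_i = \mu_{i-1}(\mu_{i-1}+1)$ for $i>1$, and then to account for the three complexity measures phase by phase. First I would establish correctness together with the bound on the number of phases. Let $s_i$ denote the minimum size of a growable fragment entering phase $i$. Then $s_1 = 1 = \mu_1$, and Lemma~\ref{lotkerLemma}, applied with $\mu = \mu_i$ to the current partition, yields $s_{i+1} \ge (\mu_i+1)s_i$; by induction $s_i \ge \mu_i$, since $(\mu_i+1)\mu_i = \mu_{i+1}$. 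The recurrence forces $\mu_{i+1} \ge \mu_i^{2}$, hence $\mu_i$ reaches $n$ after $k = O(\log\log n)$ phases; at that point no growable fragment can survive (it would need more than $n$ nodes), so $E$ becomes empty, the current partition into fragments equals the minimum spanning forest of $G$, and every node knows it because all edges of $E_{F,\mu_i}$ are broadcast to the whole clique in each phase.

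Next I would analyse a single phase. Step~\ref{gen:s1} (SelectEdges) takes one round: each node $v$ sends, to all nodes of each fragment $F \neq F^v$, the lightest edge from $v$ into $F$ (and nothing if no such edge exists). Since the target fragments are pairwise disjoint, $v$ emits at most one distinct message per fragment, i.e.\ $O(n)$ distinct messages, so the range in this round is $O(n)$, and each message is a single edge of $O(\log n)$ bits. After this round every node $u\in F$ can recover, for each fragment $F'$, the lightest $F$--$F'$ edge as the minimum over $v\in F'$ of the values it received (the partition is common knowledge), hence it computes $E_{F,\mu_i}$ locally. Step~\ref{gen:s2} takes one more round: because $|F| \ge s_i \ge \mu_i \ge |E_{F,\mu_i}|$ for every growable $F$, the nodes of $F$ can be assigned distinct edges of $E_{F,\mu_i}$, and each broadcasts its edge to the whole clique, so the range here is $1$ and the capacity is $O(\log n)$. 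Step~\ref{gen:s3} is purely local. Thus a phase costs $O(1)$ rounds, range $O(n)$, and per-round edge capacity $O(\log n)$.

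Combining the two, Algorithm~\ref{RCAST_MSF} with this choice of $\mu_i$ runs in $O(\log\log n)$ rounds, has range $O(n)$, and its total edge capacity is $\sum_i \beta_A(i) = O(\log\log n)\cdot O(\log n) = O(\log n\log\log n)$, as claimed.

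I do not expect a genuine obstacle here: the statement is essentially bookkeeping over the description of the Lotker et al.\ algorithm already given above. The only point needing a little care is the arithmetic of the recurrence $\mu_{i+1}\ge\mu_i^{2}$, which must simultaneously (i) preserve the invariant $|F|\ge\mu_i$ that lets $E_{F,\mu_i}$ be broadcast in one round with range $1$, and (ii) guarantee that fragment sizes reach $n$ within $O(\log\log n)$ phases; both are immediate consequences of the square-growth of $\mu_i$.
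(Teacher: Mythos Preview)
Your proposal is correct and follows exactly the approach the paper uses: the corollary is not given a separate proof in the paper but is simply a summary of the preceding prose description of the Lotker et al.\ implementation of Algorithm~\ref{RCAST_MSF}, and your write-up faithfully formalises that description (the one-round SelectEdges by sending the lightest $v$-to-$F$ edge to all nodes of $F$, the one-round broadcast of $E_{F,\mu_i}$ using $|F|\ge\mu_i$, and the double-exponential growth of $\mu_i$ giving $O(\log\log n)$ phases).
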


\subsection{Minimum spanning forest algorithm in $\rcast(n,2)$}
In this section we will show an implementation of Algorithm~\ref{RCAST_MSF} 
in $O(\log\log n)$ rounds, \tj{which is also efficient with respect to the
range and edge capacity.} 
%
%
As we discussed above, the only part of the Lotker et al.\ \cite{Lotker:2003:MCO:777412.777428} implementation of Algorithm~\ref{RCAST_MSF} with large range is the selection of the set of the lightest relevant edges 
\tj{for the current fragments.} 
Therefore, in order to reduce the range without increasing round complexity,
it is sufficient to design a new version of this part of Algorithm~\ref{RCAST_MSF}
for the sequence $\mu_1=1$ and $\mu_{i}=\mu_{i-1}(\mu_{i-1}+1)$ for $i>1$.
We give such a solution in this section.
    
    
		

First, observe that the set of $\mu$ lightest relevant edges incident to a fragment $F$ (i.e., $E_{F,\mu}$) is included in the union of $\mu$ lightest relevant edges incident to each node from $F$,
i.e., $E_{F,\mu}\subseteq\bigcup_{v\in F}E_{v,\mu}$. Thus, in order to determine
$E_{F,\mu}$, it is sufficient to distribute/broadcast
information about $E_{v,\mu}$ for each $v\in F$ among nodes of $F$.
This task corresponds to the local broadcast problem (see Section~\ref{sub:s:tools}).
More precisely, given a partition $\mathbb{F}=\{F_1,\ldots,F_k\}$ in phase $i$, each
$v\in F_j$ is supposed to broadcast the message $M^v$ of size $b_i=O(\mu_i\log n)$
(i.e., description of $\mu$ lightest relevant edges incident to $v$) to all nodes of $F_j$.
Using Proposition~\ref{p:local:broadcast}, we can solve this task in $O(1)$ rounds
with range $r=2$ and edge capacity $1$, provided 
\begin{equation}\label{e:broadcast}
|F_i|\mu_i\log n\leq n.
\end{equation}
However, for large fragments and/or large $\mu_i$, this inequality is not
satisfied. Therefore, we need a more general observation saying that $\mu$ lightest relevant edges
incident to a set $A$ (not necessarily a fragment) might be chosen from 
the sets of $\mu$ lightest edges incident to subsets $A_j$ forming a partition of $A$.
\begin{fact}\labell{f:mu:edges}
\tj{Let $\mathbb{F}$ be a partition of a graph in fragments and 
let $A_1,\ldots,A_k$ be a partition of 
the set of nodes of a fragment $F\in\mathbb{F}$.
Then, for each $\mu\in\NAT$, $E_{F,\mu}\subseteq\bigcup_{j\in[k]}E_{A_j,\mu}$.}
\end{fact}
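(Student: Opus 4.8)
The plan is to argue one edge at a time: take an arbitrary $e\in E_{F,\mu}$ and show that it lies in $E_{A_j,\mu}$ for the (unique) index $j$ with the $F$-endpoint of $e$ in $A_j$. Write $e=(v,u)$ with $v\in F$ and $u\notin F$; by definition of relevance for $F$, $e$ is the lightest edge joining $F$ with $F^u$, and $F^u\neq F$. First I would verify that $e$ is relevant for $A_j$, where $v\in A_j$: every edge joining a node of $A_j$ with a node of $F^u$ is in particular an edge joining $F$ with $F^u$ (since $A_j\subseteq F$), hence no lighter than $e$; and $e$ itself joins $v\in A_j$ with a node of $F^u\neq F^v$; so $e$ is the lightest edge joining $A_j$ and $F^u$. (Here and below I fix a global tie-breaking order on edges of equal weight, so that ``the lightest edge'' of a nonempty edge set is well defined; ties do not affect any of the arguments.)

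The main work is a counting argument showing $e\in E_{A_j,\mu}$. Suppose for contradiction $e\notin E_{A_j,\mu}$. If $A_j$ has fewer than $\mu$ relevant edges in total, then $E_{A_j,\mu}$ contains all of them, and we are done since $e$ is one of them; so assume there are relevant edges $e'_1,\dots,e'_\mu$ for $A_j$, each strictly lighter than $e$, with $e'_i=(v'_i,u'_i)$ and $v'_i\in A_j$. The two key points I would establish about the fragments $F^{u'_1},\dots,F^{u'_\mu}$ are that they are pairwise distinct --- two relevant edges of $A_j$ pointing to the same fragment would each be the unique lightest $A_j$-to-that-fragment edge, hence equal --- and that none of them equals $F^u$, because the unique lightest $A_j$-to-$F^u$ edge is $e$, which is not among the $e'_i$. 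Clearly none equals $F$ either.

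Finally I would transfer lightness back to $F$: for each $i$ let $g_i$ be the lightest edge joining $F$ and $F^{u'_i}$; it exists (witnessed by $e'_i$), it is relevant for $F$, and since $e'_i$ is one such edge, $g_i$ is no heavier than $e'_i$, hence strictly lighter than $e$. The $g_i$ are pairwise distinct because the fragments $F^{u'_i}$ are. This exhibits $\mu$ relevant edges of $F$ strictly lighter than $e$, contradicting $e\in E_{F,\mu}$; hence $e\in E_{A_j,\mu}$, and therefore $E_{F,\mu}\subseteq\bigcup_{j\in[k]}E_{A_j,\mu}$. I expect the middle step to be the main obstacle: one must ensure the $\mu$ cheaper relevant edges of $A_j$ reach $\mu$ \emph{distinct} fragments, all different from $F^u$, so that they really produce $\mu$ distinct relevant edges of $F$ below $e$.
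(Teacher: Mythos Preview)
Your argument is correct. The paper itself does not supply a proof of this fact; it is stated as a self-evident observation and used immediately. Your elementwise argument---checking that each $e\in E_{F,\mu}$ is relevant for the block $A_j$ containing its $F$-endpoint, and then using the $\mu$ distinct target fragments of any $\mu$ lighter $A_j$-relevant edges to manufacture $\mu$ lighter $F$-relevant edges---is exactly the natural way to justify it, and the tie-breaking convention you adopt is the standard one assumed implicitly throughout MSF arguments of this kind.
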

Using Fact~\ref{f:mu:edges} we compute $E_{F,\mu}$ for a large fragment in the following way.
The set $F$ is split into small groups and $\mu$ lightest relevant edges are selected for each group and knowledge about them is distributed among nodes of the group. Then, the leader of each group is chosen and the task is reduced to choosing $\mu$ lightest relevant edges among the sets of
$\mu$ edges known to the leaders. This reduces our problem to its another instance with smaller
size of nodes. Another issue to deal with is to set the value of $\mu_i$ not too 
large for each $i$, in order to satisfy (\ref{e:broadcast}).
The choice of parameters in Algorithm~\ref{a:edgeSelection} guarantees that
the task of selecting $\min\{\mu_i,n^{1/3}\}$ lightest relevant edges incident to each fragment is possible
in $O(1)$ rounds with edge capacity $1$.

\comment{
then select $\mu$ relevant edges for the whole fragment. Correctness of this procedure base on the observation that if we take set $F$ and its partition $V_1, V_2, \dots, V_k$ then $E_{F,\mu} \subseteq \bigcup\limits_{V_i}  E_{V_i,\mu} $.
    \par In order to distribute knowledge about relevant edges we will use the solution of the following problem. The input of the problem is: 
    \begin{itemize}[noitemsep,topsep=0pt]
     \item partition $V_1, V_2, \dots V_k$ of $V$ is known to the whole network
     \item for each $i$, some nodes $v \in V'_i \subseteq V_i$ has a message $m^v$ of size $p_i$, sets $V'_i$ are known to the whole network
     \end{itemize}
     As a result each node $v \in V_i$ has to know messages $m^u$, for all $u \in V'_i$. We will show, that it is possible to send those messages in two rounds using one bit per communication link per round as long as $p_i |V'_i| \leq n$. For each $i$ let us split nodes $v_1, v_2, \dots, \in V$ into $|V'_i|$ segments. The algorithm is as follows.
    \begin{algorithm}[h]
      \caption{partition\_broadcast}
      \label{broadcastInFragment}
      \begin{algorithmic}[1]
      \State \textbf{each node $v$ sends $i$th bit to $i$th node in his segment}
      \State \textbf{each node $u$ in segment assigned to $v$ sends received bit to all nodes in $V_i$ such that $v\in V_i$}
      \end{algorithmic}  
    \end{algorithm}
    \par Using this procedure several times we can say, that if each node $v \in V_i$ has message of length $p_i \in O(\frac{n}{|V_i|})$, then it is possible to solve this problem in some constant number of rounds, each using one bit per communication link. Therefore this procedure is implementable in rcast(n,2).
}		
		
\comment{
    \par With such procedure we can now send the $\mu$ lightest relevant edges for each node $v$ to all other nodes in fragment as long as $\mu \in (\frac{n}{\log n})^{1/3}$. Procedure will be presented in two versions, the first for fragments of size $O(\frac{n}{\log n})^{2/3})$, second for the larger components. 
    \par The first version of the procedure \ref{edgeSelectionSmall} simply takes partition of fragment into groups of size one. As $|\bigcup\limits_{V_i}  E_{V_i,\mu}| \in O(\frac{n}{\log n})$, we can distribute knowledge about $\bigcup\limits_{V_i}  E_{V_i,\mu} $ simply by sending the $\mu$ lightest relevant edges to all other nodes in the fragment.
    \begin{algorithm}
      \caption{select\_edges\_small($\mu$)}
      \label{edgeSelectionSmall}
      \begin{algorithmic}[1]
      \State \textbf{each node $v$ selects $\mu$ relevant edges} \Comment those edges can be encoded on $\mu \log n$ bits
      \State \textbf{each node $v$ sends those edges to all $u\in{F^v}$ using partition\_broadcast}
      \State \textbf{each node $v$ from all received edges selects $\mu$ relevant for the whole fragment}
      \end{algorithmic}
    \end{algorithm}
    
    \par The second version of the procedure \ref{edgeSelection} splits nodes of the fragment into groups, each containing $\Theta{(\frac{n}{\log n})^{2/3}}$ nodes. In each group we select $\mu$ relevant edges, similarly as for small fragments. Then we have $\mu$ edges per group, and $O(n^{1/3} \log^{2/3}n)$ groups, thus we can use partition\_broadcast to distribute knowledge about edges to all nodes in a fragment. Such algorithm allows us to select $\mu$ relevant edges even if the size of fragment is $\Omega(n)$.
}
    \begin{algorithm}[h]
      \caption{SelectEdges($\mu,\mathbb{F}$)\Comment{the algorithm for node $v$}}
      \label{a:edgeSelection}
      \begin{algorithmic}[1]
			\State $\mu'\gets \min\{n^{1/3},\mu\}$
			\State $n_{\max}\gets n^{1/3}$
			\For{each $F\in\mathbb{F}$ and each $v\in F$ simultaneously}
				\State $A\gets$ the nodes of $F$
				\State $M^v\gets \mu'$ lightest relevant edges incident to $v$
				\If{$|A|\mu'\log n>n$}	
					\State $n'\gets \frac{|A|}{\mu'\log n}$
					\State $k\gets \lceil|A|/n'\rceil$
					\State split $A$ into disjoint subsets $A_1,\ldots,A_k$ such that
						$|A_i|=n'$ for $i<k$ and $|A_k|\le n'$
                    \For{each $A_i$ simultaneously}
                      \State LocalBroadcast$(A_i, A_i, \mu'\log n)$
                    \EndFor
					\State let $A_j$ denote the set which contains $v$
					\State $M^v\gets\mu'$ lightest edges incident to $A_j$
					\If{ID$(v)=\min\{\text{ID}(u)\,|\, u\in A_j\}$} 
						\State $M^v\gets\mu'$ lightest edges incident to $A_j$
					\Else
						\State $v$ is removed from $A$
					\EndIf
				\EndIf
				\State LocalBroadcast$(A, F, \mu'\log n)$
				\State $v$ determines $E_{F,\mu'}$ on the basis of received messages\Comment{see Fact~\ref{f:mu:edges}}
			\EndFor
      \end{algorithmic}
    \end{algorithm}
    
\begin{proposition}\labell{p:select:edges}
Algorithm~\ref{a:edgeSelection} determines the set $E_{F,\mu'}$ 
lightest relevant edges incident to each fragment $F\in\mathbb{F}$
in $O(1)$ rounds with edge capacity $1$ and range $r=2$, where
$\mu'=\min\{n^{1/3},\mu\}$.
Moreover, $E_{F,\mu'}$ is known to each $v\in F$ for each $F\in\mathbb{F}$ at the end of an execution.
\end{proposition}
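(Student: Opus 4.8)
The plan is to verify three things in turn: that every surviving node can reconstruct the required edge set locally (correctness), that each invocation of LocalBroadcast satisfies the hypothesis $|T|b=O(n)$ of Proposition~\ref{p:local:broadcast}, and that the invocations issued in one line of Algorithm~\ref{a:edgeSelection} are over pairwise disjoint transmitter and receiver sets so they run in parallel. Correctness is driven entirely by Fact~\ref{f:mu:edges}. Since the current partition $\mathbb{F}$ is broadcast to the whole network, each node knows for every incident edge whether its endpoints lie in different fragments, so it can form $M^v$ (the $\mu'$ lightest relevant edges incident to $v$) with no communication. Fix a fragment $F$ with node set $A$. In the easy case $|A|\mu'\log n\le n$, the single call LocalBroadcast$(A,F,\mu'\log n)$ delivers every $M^v$ ($v\in F$) to every node of $F$; applying Fact~\ref{f:mu:edges} to the partition of $F$ into singletons gives $E_{F,\mu'}\subseteq\bigcup_{v\in F}E_{v,\mu'}=\bigcup_{v\in F}M^v$, so each node of $F$ extracts $E_{F,\mu'}$.

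For the case $|A|\mu'\log n>n$ I would argue in two stages. The set $A$ is split into $k=\lceil|A|/n'\rceil$ groups of size at most $\lceil n'\rceil$, where $n'=|A|/(\mu'\log n)$. Each within-group call LocalBroadcast$(A_j,A_j,\mu'\log n)$ satisfies $|A_j|\cdot\mu'\log n\le(n'+1)\mu'\log n=|A|+\mu'\log n=O(n)$, since $|A|\le n$ and $\mu'\le n^{1/3}$; so by Proposition~\ref{p:local:broadcast} (using its constant-factor slack) every node of $A_j$ receives the $M^u$'s of its group, and by Fact~\ref{f:mu:edges} applied to $A_j$ split into singletons it can compute $E_{A_j,\mu'}$, which the minimum-ID node keeps as its new message while the others leave $A$. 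After this stage $A$ has exactly $k$ elements; moreover $|A|\mu'\log n>n$ forces $|A|>n/(\mu'\log n)\ge\mu'\log n$ (using $\mu'\le n^{1/3}$ and $n$ large enough, the remaining small $n$ being trivial), hence $n'\ge1$ and $k=\lceil\mu'\log n\rceil\le\mu'\log n+1$. Therefore the final call LocalBroadcast$(A,F,\mu'\log n)$ has $|A|\cdot\mu'\log n\le(\mu'\log n+1)\mu'\log n=O((\mu'\log n)^2)=O(n^{2/3}\log^2 n)=o(n)$, so Proposition~\ref{p:local:broadcast} applies; it delivers each $M^v=E_{A_j,\mu'}$ (over the $k$ surviving leaders) to all of $F$, and Fact~\ref{f:mu:edges} applied to the partition $A_1,\dots,A_k$ of $F$ gives $E_{F,\mu'}\subseteq\bigcup_{j\in[k]}E_{A_j,\mu'}$, completing the reconstruction.

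Finally I would check parallelism and collect the bounds. The fragments of $\mathbb{F}$ are pairwise disjoint and, inside a large fragment, the groups $A_j$ are pairwise disjoint; hence across all fragments the transmitter sets and the receiver sets used in the within-group step are pairwise disjoint, and likewise the transmitter sets $A$ (leader sets, or whole small fragments) and the receiver sets $F$ used in the final step are pairwise disjoint. Combined with the per-call bound $|T|b=O(n)$ established above, the simultaneous-execution clause of Proposition~\ref{p:local:broadcast} makes each line run in $O(1)$ rounds with range $r=2$ and edge capacity $1$, and the algorithm uses only $O(1)$ such lines. The step I expect to be the crux is the size bound on the leader set: this is exactly where the cap $\mu'=\min\{n^{1/3},\mu\}$ is needed, ensuring that after one level of group-reduction the surviving set together with messages of length $O(\mu'\log n)$ still fits the $O(n)$ budget of the local-broadcast primitive; the rest is bookkeeping.
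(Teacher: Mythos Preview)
Your proposal is correct and follows essentially the same approach as the paper: verify the $|T|b=O(n)$ hypothesis of Proposition~\ref{p:local:broadcast} for the within-group and final LocalBroadcast calls (using the cap $\mu'\le n^{1/3}$ for the latter), check disjointness of transmitter/receiver sets across fragments and groups, and invoke Fact~\ref{f:mu:edges} twice for correctness. If anything, you are slightly more careful than the paper about integer rounding of $n'$ and about the sanity check $n'\ge 1$, but the structure is identical.
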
		
\begin{proof}
Assume that $n$ is large enough to satisfy $n^{1/3}>\log n$.
First observe that the inequality $|A|\mu'\log n\le n$
is satisfied when the last step of the algorithm is executed.
If $|F|\mu'\log n\le n$ then the claimed inequality holds,
since $|A|=|F|$ in this case. Otherwise, the size of $A$
is reduced to 
$$k=\frac{|A|}{|A|/(\mu'\log n)}=\mu'\log n\le n^{1/3}\log n < \frac{n^{2/3}}{\log n}.$$
The choice of $n'$ guarentees also that $|A_j|\mu'\log n\le |F|\le n$
for each $j\in[k]$.
Also, all fagments are pairwise disjoint, and all sets $A_j$ are pairwise disjoint (as a disjoint subsets of fragments). Thus, all execution of LocalBroadcast last $O(1)$ rounds
with edge capacity $1$, by Proposition~\ref{p:local:broadcast}.

By Fact~\ref{f:mu:edges}, the algorithm determines $\mu'$ lightest
relevant edges for elements of partitions of $F$ and eventually determines
$\mu'$ lightest relevant edges for each $F\in\mathbb{F}$, i.e.,
$E_{F,\mu'}$. For each $F\in\mathbb{F}$, the set $E_{F,\mu'}$ is 
known to all element of $F$ at the end of the execution of the algorithm, thanks to LocalBroadcast executed in the last step
of the algorithm. 
\end{proof}

\comment{
    \par Let us formulate for which $\mu_i$ we can implement Algorithm \ref{RCAST_MSF} in \rcast(n,2). Let us denote by $s_i$ the size of the smallest growable fragment. Then we will set $\mu_i = \min(s_i, (\frac{n}{\log n})^{1/3})$.
     As long as the size of the smallest growable fragment is smaller than $(\frac{n}{\log n})^{1/3}$ algorithm works as original version presented in \cite{Lotker:2003:MCO:777412.777428}. When the size of the smaller growable fragment will be larger than $(\frac{n}{\log n})^{1/3}$, then in each phase, by Lemma \ref{lotkerLemma}, we will increase it at least $(\frac{n}{\log n})^{1/3}$ times. Thus, after constant number of such phases we will find the minimal spanning forest.
}
Using Algorithm~\ref{a:edgeSelection} in the template described by Algorithm~\ref{RCAST_MSF},
we obtain the following result.
\begin{lemma}\labell{l:msf:det:range}
Assume that $\mu_1=1$ and $\mu_{i}=\min\{n^{1/3},\mu_{i-1}(\mu_{i-1}+1)\}$ for $i>1$.
Then, an implementation of Algorithm~\ref{RCAST_MSF} using the procedure
SelectEdges from Algorithm~\ref{a:edgeSelection} solves the MSF problem
in $O(\log\log n)$ rounds with range $r=2$.
\end{lemma}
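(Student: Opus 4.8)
The plan is to combine Lemma~\ref{lotkerLemma} with Proposition~\ref{p:select:edges} inside the template of Algorithm~\ref{RCAST_MSF}, and then check that the chosen sequence $\mu_i$ reaches $n$ in $O(\log\log n)$ phases. First I would argue correctness: in phase $i$ we call SelectEdges with parameter $\mu_i$, which by Proposition~\ref{p:select:edges} delivers to every node of every fragment $F$ the set $E_{F,\mu'_i}$ of the $\mu'_i$ lightest relevant edges incident to $F$, where $\mu'_i=\min\{n^{1/3},\mu_i\}$; since $\mu_i\le n^{1/3}$ by definition of the sequence, we in fact have $\mu'_i=\mu_i$. Then each node locally adds the current fragment edges $E_{\mathbb{F}}$ and recomputes the minimum spanning forest of $G(V,E_{\mathbb{F}}\cup\bigcup_F E_{F,\mu_i})$; by Lemma~\ref{lotkerLemma} the resulting partition $\mathbb{F}'$ is again a partition into fragments and every growable tree of $\mathbb{F}'$ has size at least $(\mu_i+1)\min_{F\in\mathbb{F}}|F|$. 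Because all nodes perform the same local computation on the same broadcast data, they agree on $\mathbb{F}'$, so the invariant "a common partition into fragments is known to all nodes" is maintained from phase to phase; when no growable fragment remains, the partition equals the minimum spanning forest of $G$, which is the required output.

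Next I would bound the number of phases. Let $s_i$ denote the size of the smallest growable fragment at the start of phase $i$, so $s_1=1$. As long as $s_i<n^{1/3}$ we have $\mu_i=\mu_{i-1}(\mu_{i-1}+1)\ge s_i$ (one checks by induction that $\mu_i\ge s_i$, since $s_{i}\ge (\mu_{i-1}+1)\cdot 1=\mu_{i-1}+1$ and $\mu_i=\mu_{i-1}(\mu_{i-1}+1)\ge\mu_{i-1}+1$), and Lemma~\ref{lotkerLemma} gives $s_{i+1}\ge(\mu_i+1)s_i\ge(s_i+1)s_i\ge s_i^2$, i.e. the smallest growable fragment size at least squares each phase while it stays below $n^{1/3}$. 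Hence after $O(\log\log n)$ phases we reach $s_i\ge n^{1/3}$; from that point $\mu_i=n^{1/3}$ and each further phase multiplies the smallest growable fragment size by at least $n^{1/3}+1$, so after $O(1)$ more phases every fragment has size $>n$, which is impossible unless there are no growable fragments and the algorithm halts. Altogether the loop runs $O(\log\log n)$ times.

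Finally I would account for the per-phase cost and conclude. Every phase consists of one invocation of SelectEdges, which by Proposition~\ref{p:select:edges} takes $O(1)$ rounds with range $r=2$ and edge capacity $1$; followed by the broadcast of $E_{F,\mu_i}$ to the whole network (step~\ref{gen:s2} of Algorithm~\ref{RCAST_MSF}), which, since $E_{F,\mu_i}$ has at most $\mu_i\le|F|$ edges of $O(\log n)$ bits and can be spread so that each node of $F$ broadcasts $O(1)$ edges, is one round with range $r=1$ and edge capacity $O(\log n)$ (this is exactly the step already analysed in the Lotker et al.\ implementation, which needed range only in the selection phase); followed by purely local merging. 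Since $r=\max\{2,1\}=2$ over all rounds, the whole algorithm runs in $O(\log\log n)$ rounds with range $r=2$, proving the lemma.

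The main obstacle, and the point I would be most careful about, is the interplay between the truncation $\mu'_i=\min\{n^{1/3},\mu_i\}$ in SelectEdges and the growth guarantee of Lemma~\ref{lotkerLemma}: one must verify that capping $\mu_i$ at $n^{1/3}$ (forced by inequality~(\ref{e:broadcast}) and the $O(1)$-round budget of Proposition~\ref{p:select:edges}) still yields geometric-in-the-exponent growth of fragment sizes, and in particular that the sequence $\mu_1=1,\ \mu_i=\min\{n^{1/3},\mu_{i-1}(\mu_{i-1}+1)\}$ never "outruns" the actual fragment sizes in a way that would make SelectEdges ask for more edges than a small fragment can supply — but since $E_{F,\mu}$ is simply defined as the $\min\{\mu,\#\text{relevant edges}\}$ lightest relevant edges, asking for too many is harmless, and Lemma~\ref{lotkerLemma} still applies with the effective $\mu$ equal to the number actually returned. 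A secondary point worth a sentence is that deactivated (non-growable) fragments must be handled so they no longer participate; this is automatic because SelectEdges produces the empty set for them and the local MSF recomputation leaves them untouched.
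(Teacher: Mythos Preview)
Your overall approach matches the paper's: use Proposition~\ref{p:select:edges} for edge selection with range $2$, then broadcast $E_{F,\mu_i}$ by distributing one edge per node of $F$, and bound the number of phases via the doubly-exponential growth of the sequence $\mu_i$. The per-phase cost analysis and the handling of the cap at $n^{1/3}$ are fine.

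There is, however, a genuine slip in your phase-count argument. You claim ``one checks by induction that $\mu_i\ge s_i$'' and then deduce $s_{i+1}\ge(\mu_i+1)s_i\ge(s_i+1)s_i\ge s_i^2$. Two problems: first, the two facts you cite, $s_i\ge\mu_{i-1}+1$ and $\mu_i\ge\mu_{i-1}+1$, do not together imply $\mu_i\ge s_i$; second, the inequality actually goes the \emph{other} way. The correct invariant (and the one the paper uses) is $s_i\ge\mu_i$: from $s_{i-1}\ge\mu_{i-1}$ and Lemma~\ref{lotkerLemma} one gets $s_i\ge(\mu_{i-1}+1)s_{i-1}\ge(\mu_{i-1}+1)\mu_{i-1}=\mu_i$. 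This is exactly what you need later for the broadcast step (``$\mu_i\le|F|$''), and you in fact use it there with the correct sign. But with $s_i\ge\mu_i$ your squaring step $s_{i+1}\ge(s_i+1)s_i$ no longer follows. The fix is to track the $\mu_i$ sequence directly: $\mu_{i+1}=\mu_i(\mu_i+1)>\mu_i^2$ until the cap, so $\mu_i\ge n^{1/3}$ after $O(\log\log n)$ phases; since $s_i\ge\mu_i$, fragments are then large enough that $O(1)$ further phases with $\mu=n^{1/3}$ finish. This is precisely the paper's argument, and once you reverse the inequality your proof coincides with it.
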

\begin{proof}
After an execution of SelectEdges, a designated node $v\in F$ for each fragment $F$
knows $\mu_i$ edges which should be broadcasted to all nodes in step ~\ref{gen:s2}.
The definition of the sequence $\mu'_i$ and Lemma~\ref{lotkerLemma} guarantee
that the smallest size of a fragment at the beginning of phase $i$ is
at least $\mu'_i$.
Using these facts, one can implement step~\ref{gen:s2} of Algorithm~\ref{RCAST_MSF}
in two rounds. In round 1, that the node $v\in F$ which knows $E_{F,\mu'}$
sends the $j$th edge from $E_{F,\mu'}$ to the $j$th element of $F$. In round 2,
each node broadcasts an edge received in round 1 to the whole network.
Thus, each iteration of the while-loop works (i.e., each phase) works in $O(1)$ rounds with range $r=2$.

It remains to determine the number of iterations of the while-loop
(i.e., the number of phases).
For some $i=O(\log\log n)$ we get $\mu_i\ge n^{1/3}$.
The smallest size of a (growable) component
is larger than $n^{1/3}$ after $i=O(\log\log n)$ phases.
For $j>i$, the smallest size of a growable component is increased 
(at least) $n^{1/3}$ times in the $j$th round. This results in size
$n$ for the phase $i+2$ and shows that the algorithm works in $O(\log\log n)$
rounds.
\end{proof}

    \subsubsection{Reduction of total edge capacity}

Our solution for the MSF from Lemma~\ref{l:msf:det:range} reduces the range $r$ to $2$, but
each phase requires sending $\Theta(\log n)$ bits by some nodes, because weights of
some edges are transmitted by nodes in step~\ref{gen:s2} of Alg.~\ref{RCAST_MSF}.
In order to reduce (total) edge capacity, we modify the sequence $\{\mu_i\}$ again
to make it possible that step~\ref{gen:s2} of Alg.~\ref{RCAST_MSF} requires
$O(1)$ edge capacity for large fragments and edge capacities summarize to $O(\log n)$ for
small fragments. More precisely, let
$$\mu_i=\left\{
\begin{array}{lcl}
1 & \mbox{ for } & i\le 2\log\log n\text{ (Stage 1)}\\
\min\{\mu_{i-1}^2/\log n, n^{1/3}\} & \mbox{ for } & i> 2\log\log n\text{ (Stage 2)}
\end{array}
\right.
$$


Then, we implement Alg.~\ref{RCAST_MSF} as described in Lemma~\ref{l:msf:det:range} for
the new sequence $\{\mu_i\}_i$. One can verify that executions of SelectEdges can still
be implemented in $O(1)$ rounds with capacity $1$. 
However, to reduce also total edge capacity of the whole
algorithm we change implementation of the part, where the edges from $E_{F,\mu}$
are announced for each $F$ to the whole network (step~\ref{gen:s2} of Alg.~\ref{RCAST_MSF}). 
Using Lemma~\ref{lotkerLemma}, one can observe that the size of the smallest growable fragment is
\begin{itemize}
\item
at least $2^{i-1}$ at the beginning of 
phase $i\le 2\log\log n$;
\item
at least $\mu_i$ at the beginning of 
phase $i>2\log\log n$.
\end{itemize}

    \par In a phase of $i\le 2\log \log n$ phases 
		each fragment $F$ has to broadcast a message $M^F$ of $\Theta(\log n)$ bits describing the lightest relevant edge incident to $F$. We split this message into $|F|$ fragments, each
of length $O(\frac{\log n}{|F|})$.

For $i>2\log\log n$ and a fragment $F$
we want to broadcast a description of $\frac{|F|}{\log n}$ edges, which consists of $O(\frac{|F|}{\log n} \log n)=O(|F|)$ bits. In order to do that it is enough that each node announces $O(1)$ bits to the whole network, cf.\ Proposition~\ref{p:broadcast}. 
    
    By analyzing this algorithm, we will prove the following result. 
    \begin{theorem}
      \label{t:bandwidthTheorem}
      It is possible to calculate the minimum spanning forest in $O(\log \log n)$ rounds and with total capacity of communication edges $O(\log n)$ and range $r=2$.
    \end{theorem}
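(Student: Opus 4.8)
The plan is to run the phase-based scheme of Algorithm~\ref{RCAST_MSF} exactly as in Lemma~\ref{l:msf:det:range}, but with two modifications: (1) use the two-stage sequence $\{\mu_i\}$ defined just before the statement instead of the sequence of Lemma~\ref{l:msf:det:range}; and (2) implement the announcement step~\ref{gen:s2} so that it is cheap in edge capacity. For step~\ref{gen:s1} we keep the procedure SelectEdges (Algorithm~\ref{a:edgeSelection}): by Proposition~\ref{p:select:edges} it delivers $E_{F,\mu'}$ to \emph{every} node of every fragment $F$ in $O(1)$ rounds using range $2$ and edge capacity $1$, for any value of $\mu$. Thus, at the start of step~\ref{gen:s2} all of $F$ already knows the at most $\mu_i$ edges that must be sent to the whole network, so no extra distribution round is needed: we split the $\Theta(\mu_i\log n)$-bit description of $E_{F,\mu_i}$ evenly among the $|F|$ nodes of $F$ and let each one broadcast its share of $O(\lceil \mu_i\log n/|F|\rceil)$ bits, as in Proposition~\ref{p:broadcast}. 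Since each node sends the same message to everybody, step~\ref{gen:s2} uses range $1$, so the whole algorithm stays within range $r=2$. Correctness (that the result is the MSF) is inherited from Lemma~\ref{lotkerLemma} and Lemma~\ref{l:msf:det:range}, as changing $\{\mu_i\}$ and the broadcast implementation does not affect it.

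\textbf{Round complexity.} In Stage~1 ($i\le 2\log\log n$) we have $\mu_i=1$, so by Lemma~\ref{lotkerLemma} the smallest growable fragment at least doubles each phase and reaches size $\Omega(\log^2 n)$ after $2\log\log n$ phases; in particular it becomes $\omega(\log n)$, which (as discussed below) lets $\mu_i$ be taken $\ge\log n$ at the boundary. In Stage~2 the recursion $\mu_i=\min\{\mu_{i-1}^2/\log n,\,n^{1/3}\}$ makes $\mu_i/\log n$ grow by squaring, so $\mu_i'=\min\{n^{1/3},\mu_i\}$ reaches $n^{1/3}$ after $O(\log\log n)$ further phases, and then Lemma~\ref{lotkerLemma} gives a growth factor $>n^{1/3}$ per phase, so $3$ more phases push the smallest growable fragment to size $n$. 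Each phase is $O(1)$ rounds (SelectEdges is $O(1)$ rounds by Proposition~\ref{p:select:edges}, and the split-and-broadcast of step~\ref{gen:s2} is $O(1)$ rounds), giving $O(\log\log n)$ rounds in total.

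\textbf{Total edge capacity.} This is where the two-stage sequence pays off. SelectEdges contributes capacity $1$ per phase, hence $O(\log\log n)=O(\log n)$ in total. For step~\ref{gen:s2}: in a Stage~1 phase $i$ a growable fragment $F$ satisfies $|F|\ge 2^{i-1}$ and broadcasts one edge, i.e.\ $\Theta(\log n)$ bits, so its share costs capacity $\lceil\log n/2^{i-1}\rceil$; summing over the $2\log\log n$ phases of Stage~1, the first $O(\log\log n)$ terms form a convergent geometric series $\Theta(\log n)\sum_{k\ge 0}2^{-k}=O(\log n)$ and the remaining terms are $O(1)$ each, so Stage~1 costs $O(\log n)$ in total. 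In Stage~2 the sequence is tuned so that the smallest growable fragment has size $\Omega(\mu_i\log n)$, hence every growable $F$ has $|F|\ge\mu_i\log n$, so broadcasting $\Theta(\mu_i\log n)$ bits split among $|F|$ nodes costs only $O(1)$ capacity per phase, for $O(\log\log n)$ over all of Stage~2. Adding up, the total edge capacity is $O(\log n)$, which by Fact~\ref{f:capacity:lower} is optimal, and the range is $2$. This proves the theorem.

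\textbf{Main obstacle.} The delicate point is the careful choice of $\{\mu_i\}$ around the stage boundary and its integrality: one must verify that after Stage~1 the fragments are large enough to take $\mu_i\ge\log n$ (so that $\mu_i=\min\{\mu_{i-1}^2/\log n,\,n^{1/3}\}$ stays $\ge 1$ throughout Stage~2), and that the invariant ``the smallest growable fragment has size $\ge\mu_i\log n$'' is preserved by every application of Lemma~\ref{lotkerLemma} (using $s_{i+1}\ge(\mu_i'+1)s_i\ge\mu_i s_i\ge\mu_i^2\log n\ge\mu_{i+1}\log n$). This invariant is exactly what makes both SelectEdges and step~\ref{gen:s2} implementable with $O(1)$ edge capacity. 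The second point needing care is confirming that the Stage~1 capacities sum to $O(\log n)$ rather than $O(\log n\log\log n)$, which relies on the geometric growth of fragment sizes so that only the first $O(\log\log n)$ phases contribute more than $O(1)$ and those contributions are geometrically decreasing.
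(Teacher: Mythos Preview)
Your approach matches the paper's: the same two-stage sequence $\{\mu_i\}$, SelectEdges (Algorithm~\ref{a:edgeSelection}) for step~\ref{gen:s1}, and a split-and-broadcast among the nodes of each fragment for step~\ref{gen:s2}. The capacity accounting is also the same---a geometric sum $O(\sum_i \log n/2^i)=O(\log n)$ over Stage~1 and $O(1)$ per phase over Stage~2.

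There is one genuine gap, exactly at the boundary you flag. Your Stage~2 round-count argument says ``$\mu_i/\log n$ grows by squaring,'' but $1$ is a fixed point of squaring: if you take $\mu_{i_0}=\log n$ (which is the largest value your invariant $|F|\ge\mu_i\log n$ permits when $|F|=\log^2 n$ at the end of Stage~1), then $\mu_{i_0+1}=\mu_{i_0}^2/\log n=\log n$ again, and $\mu_i$ never increases. So as written your argument does not establish $O(\log\log n)$ phases in Stage~2. The paper avoids this by tracking the fragment-size lower bound $f_i$ directly rather than $\mu_i$: since $f_i\ge\log^2 n$ throughout Stage~2, broadcasting $f_i/\log n\ge f_i^{1/2}$ edges yields $f_{i+1}\ge f_i\cdot f_i^{1/2}=f_i^{3/2}$, and this recursion (with no fixed point above $1$) reaches $n^{1/3}$ in $O(\log\log n)$ phases. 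Your proof is easily repaired either by adopting this $f_i^{3/2}$ bound, or by running Stage~1 for one extra phase so that $|F|\ge 2\log^2 n$ and $\mu_{i_0}$ can be taken strictly greater than $\log n$.
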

    \begin{proof}
    \noindent\textbf{Number of rounds.}
    The first stage consists of $2 \log \log n$ rounds by definition. The second stage also consists of $O(\log \log n)$ rounds, however, we need a slightly more detailed analysis
to show this fact. 
    \par At the beginning of the second stage, the size of all growable fragments is at least  $\log^2 n$. Assume that the size of each growable fragment at the beginning of phase $i$ is at least
		$\mu_i$.
		Then, $\frac{\mu_i}{\log n}$ lightest relevant edges announced by each fragment satisfies $\frac{\mu_i}{\log n}\ge\mu_i^{1/2}$. Therefore, by Lemma \ref{lotkerLemma}, the size of the smallest growable fragment increases $\mu_i^{1/2}+1$ times
		in a phase. Thus, the size of the smallest growable fragment in the $i$th phase during the second stage is limited from below by $f_{i}$ defined as follows: $f_{1+2\log\log n} = \log^2 n$, $f_i = f_{i-1}^{3/2}$ for $i>1+2\log\log n$. For some $i \in \Theta(\log \log n)$, the size of the smallest fragment will be at least $n^{1/3}$.
Then, as shown in the previous section (Lemma~\ref{l:msf:det:range}), we obtain MSF after $O(1)$ additional phases.
%
    
\noindent\textbf{Total capacity of communication edges.}
    In the first stage we have $O(\log \log n)$ phases, the size of the smallest growable fragment in the $i$th phase is at least $2^{i-1}$. 
Thus total capacity of communication edges of the first stage is $O(\sum\limits_i \frac{\log n}{2^i})=O(\log n)$. In the second stage we have $O(\log \log n)$ phases, each is implemented in $O(1)$ rounds with edge capacity $1$, thus total capacity of communication edges of those stages is $O(\log \log n)$. Therefore total capacity of communication edges of presented algorithm is $O(\log n  + \log \log n) = O(\log n)$.
    \end{proof}

	
	\section{Randomized rcast algorithm for connected components}

The fastest known randomized algorithm calculating Connected Components in the unicast model
works in $O(\log^* n)$ communication rounds \cite{GhaffariParter2016}. The algorithm works in phases. 
At the beginning of each phase, a partition of an input graph into components is known to all nodes.
In a phase of the algorithm, the number of growable components drops from $\frac{n}{\log^2 x}$ to $\frac{n}{x}$, by simulating $\Theta(\log x)$ steps of the standard Boruvka's algorithm.  
%
Each phase is implemented in $O(1)$ rounds.
The key tool to make it possible is a special kind of \emph{linear sketches}.

In the following, we first describe the linear sketches of Ghaffari and Parter \cite{GhaffariParter2016}. 
Then, we shortly describe the $O(\log^* n)$ algorithm for connected components in the unicast
congested clique \cite{GhaffariParter2016}. In the next part, we present an implementation of the algorithm in the $\rcast(n,2)$ model. Finally, we provide version of the algorithm with optimal total edge capacity and range $2$.

\subsection{Linear sketches}
In order to build Parter-Ghaffari's sketches for a graph with $n$ nodes, a preprocessing
is necessary. During the preprocessing, each (prospective) edge $(u,v)$ is assigned an ID of size $O(\log n)$,
based on a random seed of size $O(\log n)$.
In order to build sketches for a given graph $G(V,E)$ with $n$ nodes and a parameter $x\le n$, the sets
$E_1, E_2,\ldots, E_{10\log x}$ included in $E$ are \tj{chosen such that each
edge $e\in E$ belongs $E_j$ with probability $1/2^j$ and all random choices are independent}.
For $v\in V$ and $A\subset V$, let $E_j(v)$ be the set of elements of $E_j$ incident to $v$
and let $E_j(A)$ be the set of elements of $E_j$ incident to $A$, i.e.,
$E_j(A)=\{\{u,v\}\in E_j\,|\, u\in A, v\not\in A\}$.
\newcommand{\sketch}{\text{sketch}}
Then, $\sketch(\mathbb{X})$ ($\mathbb{X}$ may be a set or a single node) is a table consisting of $10\log x$ rows, each row contains a bit string of length $O(\log n)$. 
The $j$th row of $\sketch(\mathbb{X})$ is the xor of IDs of all elements of $E_j(\mathbb{X})$.
The sequence of $\log x$ sketches for a set or a node will be called its \emph{multi-sketch}. 
\tj{Thus, a multi-sketch might be seen as a table consisting of $10\log^2x$ rows.
By $\sketch_r(A)$ and $\text{multi-sketch}_r(A)$ we denote the $r$th row of a sketch and
a mutli-sketch of $A$, respectively.}

\tj{Below, we give the key properties of sketches for design of distributed algorithms
for graph connectivity.}
\comment{As shown in \cite{GhaffariParter2016}, a sketch of a component with $O(x^5)$ outgoing
The key property of sketches useful in design of distributed algorithms is that the sketch
of a set might be easily obtained from the sketches of its subsets.}
\begin{proposition}\labell{f:xor}\cite{GhaffariParter2016}
1.~It is possible to determine an edge $\{u,v\}$ such that $u\in A$ and $v\not\in A$ from a sketch of $A\subset V$ with
probability $\Omega(1)$,
provided the number of edges $\{u,v\}$ such that $u\in A$, $v\not\in A$ is at most $x^5$.\\
2.~The sketch of a set $A=A_1\cup A_2\subset V$ for disjoint sets $A_1, A_2$ is equal to\\ $\sketch(A_1) \text{ xor } \sketch(A_2)$. 
That is, the $i$th row of $\sketch(A)$ is equal to the xor of the $i$th row of 
$\sketch(A_1)$ and the $i$th row of $\sketch(A_2)$. 
\end{proposition}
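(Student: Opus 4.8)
The plan is to prove the two parts separately; Part~2 is a statement about linearity of xor and follows from a short case analysis, while Part~1 is the standard $\ell_0$-sampling recovery argument applied to the cut $(A,V\setminus A)$. For Part~2, observe that by definition the $j$th row of $\sketch(\mathbb{X})$ equals $\bigoplus_{e\in E_j} c^{\mathbb{X}}_j(e)\cdot\text{ID}(e)$, where $c^{\mathbb{X}}_j(e)\in\{0,1\}$ is the indicator of the event ``$e$ has exactly one endpoint in $\mathbb{X}$''. Since xor is linear over $\mathbb{F}_2$, it suffices to check that $c^A_j(e)=c^{A_1}_j(e)\oplus c^{A_2}_j(e)$ for every $e=\{u,v\}\in E_j$, whenever $A=A_1\cup A_2$ with $A_1\cap A_2=\emptyset$. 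I would verify this by distinguishing cases according to which of the three pairwise disjoint sets $A_1$, $A_2$, $V\setminus A$ contains $u$ and which contains $v$: if both endpoints lie in $A$ then, by disjointness, one is in $A_1$ and the other in $A_2$, so the left side is $0$ and the right side is $1\oplus 1=0$; if exactly one endpoint lies in $A$ it lies in exactly one of $A_1,A_2$, so both sides equal $1$; if no endpoint lies in $A$, both sides are $0$. Taking the xor over all $e\in E_j$ then gives the claimed identity, row by row.

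For Part~1, let $k$ be the number of edges of the cut $(A,V\setminus A)$; we may assume $1\le k\le x^5$, since otherwise there is nothing to prove. Because the sampling rates $2^{-1},2^{-2},\dots,2^{-10\log x}$ run through all scales between $1/2$ and $x^{-10}<k^{-1}$, there is a level $j^*\in\{1,\dots,10\log x\}$ at which the expected number of surviving cut edges, $k\cdot 2^{-j^*}$, lies in $[\,1/2,\,1\,]$ (for $k=1$ one takes $j^*=1$); this is exactly where the hypothesis $k\le x^5$ is used. A routine computation with the binomial distribution---the $k$ cut edges enter $E_{j^*}$ independently, each with probability $2^{-j^*}$---then shows that exactly one cut edge survives into $E_{j^*}$ with probability at least a positive absolute constant. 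Conditioned on this event, the $j^*$th row of $\sketch(A)$ is precisely $\text{ID}(e)$ for that single cut edge $e$, and $e$ is identified because, by the preprocessing, the ID of each prospective edge incident to a node is known to that node. It remains to rule out ``false singletons'': since edge IDs are $O(\log n)$-bit strings chosen essentially uniformly at random and there are at most $\binom{n}{2}$ of them, a union bound shows that with high probability the ID assignment is injective and the xor of any two or more distinct edge IDs differs from every valid edge ID; hence, whp, a row that coincides with a valid edge ID is a genuine singleton. Combining the constant success probability at level $j^*$ with this high-probability event yields the claimed $\Omega(1)$ probability of recovering a cut edge.

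The main obstacle is the analysis of Part~1: one must certify that a level $j^*$ with $\Theta(1)$ expected survival actually lies among the $10\log x$ sampled levels, and---more delicately---one must make precise what it means to ``determine an edge from the sketch'', i.e., guarantee that a cancellation of several cut-edge IDs does not masquerade as the ID of a single edge. Part~2, by contrast, is routine once the indicator identity for the three-way partition $\{A_1,A_2,V\setminus A\}$ is checked.
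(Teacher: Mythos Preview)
The paper does not give its own proof of this proposition; it simply cites \cite{GhaffariParter2016}. So there is no ``paper's proof'' to compare against, and your write-up is effectively filling in a reference. With that in mind, two points on the content of your argument.

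\medskip
\textbf{Part 2.} Your case analysis has a slip. You write that ``if both endpoints lie in $A$ then, by disjointness, one is in $A_1$ and the other in $A_2$''. This is false: both endpoints could lie in $A_1$, or both in $A_2$. In those sub-cases $c^{A_1}_j(e)=c^{A_2}_j(e)=0$, so the xor is still $0=c^A_j(e)$ and the identity holds, but you need to say so. Once these two missing sub-cases are added the argument is complete.

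\medskip
\textbf{Part 1.} The level-selection and singleton-probability parts are fine. The gap is your handling of false positives. You claim that ``a union bound shows that \ldots\ the xor of any two or more distinct edge IDs differs from every valid edge ID''. Taken at face value this is a union bound over all subsets of edges of size at least two, which is $2^{\Theta(n^2)}$ events; with $O(\log n)$-bit IDs each event fails with probability only $n^{-\Theta(1)}$, so the bound is vacuous. You cannot rule out all xor-collisions this way. What actually works (and what \cite{GhaffariParter2016} does) is to argue per row rather than per subset: condition on the set $S\subseteq E_{j}(A)$ of surviving cut edges at a given level and use that, over the independent randomness of the ID assignment, $\bigoplus_{e\in S}\text{ID}(e)$ is uniform whenever $|S|\ge 2$; hence it coincides with a valid edge ID with probability at most $\binom{n}{2}\cdot n^{-c}$, and a union bound over the $10\log x$ rows (and, if needed, over the phases) suffices. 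Alternatively one augments each row with a checksum (e.g., a second random linear form of the surviving IDs) that certifies $|S|=1$. Either fix is short, but the union bound you wrote does not do the job as stated.
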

Note that, by Proposition~\ref{f:xor}.2, the sketch of a set $A$ is equal to
the xor of sketches of all elements of $A$.

\subsection{Ghaffari-Parter $O(\log^* n)$ connected components algorithm}

Ghaffari-Parter algorithm for connected components in the unicast congested clique works
in $O(\log^* n)$ phases, each phase consists of $O(1)$ rounds.
At the beginning of a phase, a partition $\mathbb{C}$ of an input graph into $O(n/\log^2 x)$ (growable)
components is known to each node. As a result of the phase, the number of components
is reduced to $O(n/x)$, whp.
During the phase (see Algorithm~\ref{alg:logstar} for the pseudocode):
\begin{enumerate}[label=(\roman*), noitemsep,topsep=0pt]
\item
multi-sketches are computed for each component and sent to the leader node $u^*$;
\item
the leader $u^*$ locally simulates $\log x$ steps of the Boruvka's algorithm, using
obtained \tj{multi-sketches} of components;
\item
the leader distributes (with help of other nodes) information about new partition into components;
\item \label{a:rand:rand}
each node $v$ broadcast a random edge $\{u,v\}$ such that $C^u\neq C^v$ and a partition is updated using the broadcasted
edges.\footnote{Random edges are necessary in order to deal with components with degree $>x^5$, because sketches do not help much to find their neighbors.}
\item \tj{non-growable components are deactivated.}
\end{enumerate}
    \begin{algorithm}[H]
      \caption{CCLogstar\Comment{the algorithm for a node $v\in C_k$}}
      \label{alg:logstar}
      \begin{algorithmic}[1]
	\State $\mathbb{C}=\{\{v_1\},\ldots,\{v_n\}\}$
	\While{$\mathbb{C}\neq\emptyset$}\Comment{i.e., while there are active components}
		\State $x\gets\min\{y\,|\, |\mathbb{C}|<\frac{n}{10\log^2x}\}$\Comment{$\mathbb{C}$ is the number of growable components}
		\State \label{star:sketches:comp} Compute multi-sketches of all components from $\mathbb{C}$
		\State \label{star:sketches:send} Distribute the multi-sketches in the network
		\State \label{star:boruvka} Update 
		$\mathbb{C}$ by simulating $\Theta(\log^*n)$ rounds of Boruvka's algorithm, using sketches.
		\State \label{star:real} Determine edges in the input graph which connect old components in the new ones
		\State \label{star:random} 
		Broadcast a random edge incident to each component, update $\mathbb{C}$ based on these edges.
		\State \label{star:deactivate} Deactivate (remove from $\mathbb{C}$) non-growable components.
	\EndWhile
      \end{algorithmic}
    \end{algorithm}

\noindent Step~\ref{star:real} of Alg.~\ref{alg:logstar} does not require any work in \cite{GhaffariParter2016}, since sketches are computed for actual edges of the input graph (this step  will become important in our new algorithm). Below, we shortly describe \tj{some other aspects of} an implementation of the above steps in the unicast congested clique in \cite{GhaffariParter2016}:
\begin{enumerate}[label=(\alph*),noitemsep,topsep=0pt]
\item
For each edge $\{u,v\}\in E$, the node with larger ID (say $u$) makes random choices determining
to which of the sets $E_1, E_2,\ldots, E_{10\log x}$ the edge $\{u,v\}$ belongs. 
\item
Each node $v$ computes individually its $\log x$ sketches. In order to make it possible,
$u$ sends (an encoding of) $\log^2 x$ bits to $v$ determining in which rows of multi-sketches
the edge $(u,v)$ is added,
\newcommand{\ID}{\text{ID}}
for each edge $\{u,v\}\in E$ such that $\ID(u)>\ID(v)$.
\item
Each component $C_i$ for $i\in[1,n/\log^2 x]$ has assigned a \emph{representative} set of 
$\log^2 x$ nodes
$V_i=\{v_{1+(i-1)y}, v_{2+(i-1)y}, \ldots, v_{iy}\}$, where $y=\log^2x$. Each node $v\in C_i$ sends
the $j$th row of multi-sketch of $v$ to the $j$th node of $V_i$. By xoring all obtained messages,
the $j$th node of $V_i$ knows the $j$th row of the multi-sketch of $C_i$ and sends it to $u^*$.
\item
After computing a new partition of nodes into components, $u^*$ sends to each $v\in V$ its new
component ID (\tj{the ``name'' of a component} might be, e.g., the smallest ID of a node inside the component). Then, each node
broadcasts its new component to all nodes. In this way each node knows a new partition into
components. 
\end{enumerate}
The above distributed implementation requires the range $r=2^{\log^2x}$ in part (b), since
$\log^2x$ bits are transmitted over each edge. As the unicast
model allows for 
messages of length $\log n$ only, there is a problem if
$\log^2x>\log n$ (which appears in the last phase). 
In order to overcome this problem, the authors of \cite{GhaffariParter2016} argue that random distribution of transmitted
string guarantees that they can be encoded in $\log n$, whp.
The range $r$ of part (c) in the above implementation is $r=\log^2x$ 
and part (d) requires \tj{the range equal to the number of new components which might be $n/x$}.

The following result from \cite{GhaffariParter2016} implies
that Algorithm~\ref{alg:logstar} determines connected components
in $O(\log^*n)$ iterations of the while-loop, with high probability. 
\begin{lemma}\labell{l:phase:logstar}
An iteration of the while-loop Algorithm~\ref{alg:logstar} reduces the number of non-growable components from $n/\log^2x$ to
at most $n/x$, whp.
\end{lemma}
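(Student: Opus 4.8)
The statement is exactly the per-phase guarantee of \cite{GhaffariParter2016}, so the plan is to reconstruct their analysis of a single iteration of the while-loop of Algorithm~\ref{alg:logstar}. Fix a phase whose initial partition $\mathbb{C}$ has at most $n/(10\log^2 x)$ growable components, and split $\mathbb{C}$ into \emph{sparse} components (those with at most $x^5$ edges leaving them) and \emph{dense} components (the rest). The two classes are handled by the two edge-finding mechanisms of the phase: the simulated Boruvka on sketches (steps \ref{star:sketches:comp}--\ref{star:real}) takes care of sparse components, and the single random-edge broadcast (step \ref{star:random}) takes care of the dense ones.

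For the sparse side I would first note monotonicity: as Boruvka merges components their out-degrees only decrease, so a super-component built out of sparse components stays sparse throughout the $\log x$ simulated steps. In Boruvka step $t\in[\log x]$ the leader uses the $t$-th sketch of each current super-component, which by Proposition~\ref{f:xor}.2 it can assemble by xoring the $t$-th sketches of the constituents, and by Proposition~\ref{f:xor}.1 this sketch returns an outgoing edge with probability $\Omega(1)$; since the $\log x$ sketches forming a multi-sketch use the independent samplings $E_1,\dots,E_{10\log x}$, these events are independent across $t$. A Chernoff bound then shows that, as long as the current number of sparse super-components is $\omega(\log n)$, a constant fraction of them find an outgoing edge and get merged, so this count shrinks by a constant factor in each of the $\log x$ steps; calibrating the constants so that the product of these shrinkages is at least $x/\log^2 x$ (and finishing off any residual $\mathrm{poly}\log n$ components in the following phase) brings the sparse contribution down to $O(n/x)$.

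For the dense side one cannot rely on sketches — a set with more than $x^5$ leaving edges is outside the scope of Proposition~\ref{f:xor}.1 — which is why step~\ref{star:random} broadcasts a uniformly random incident outgoing edge per component; the plan is to show, by the usual random-mating style argument, that contracting along these random edges collapses the dense components so that, after deactivating the non-growable ones in step~\ref{star:deactivate}, at most $O(n/x)$ growable components with this history survive. Combining the two bounds gives at most $O(n/x)$ growable components at the end of the phase, whp; a final union bound over the $O(\log^*n)$ phases and over the at most $n$ components inside each phase upgrades ``whp per event'' to ``whp overall'', and it is affordable precisely because a phase in which $|\mathbb{C}|$ has already dropped below $\mathrm{poly}\log n$ is resolved without the randomized edge-finding at all.

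I expect the main obstacle to be the dense components: the entire burden for them rests on a single random-edge round per phase, and showing that one contraction step suffices to push their number down to $O(n/x)$ \emph{uniformly over all phases} — including the early phases, where $x$ is small and dense components may be plentiful — is the delicate quantitative heart of the argument. A secondary point to get exactly right is calibrating the per-step geometric shrinkage of the sketch-Boruvka so that its cumulative effect over the $\log x$ steps is the required factor $x/\log^2 x$ rather than merely $\mathrm{poly}\log x$.
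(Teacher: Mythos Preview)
The paper does not prove Lemma~\ref{l:phase:logstar} at all: it is stated as a result imported from \cite{GhaffariParter2016} (see the sentence immediately preceding the lemma), and no argument is given in the paper beyond that citation. So there is nothing in the paper to compare your proposal against; what you have written is a reconstruction of the Ghaffari--Parter analysis, and structurally it matches their sparse/dense split and their use of multi-sketches for the sparse side together with the random-edge round for the dense side.

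Two technical points in your reconstruction deserve attention. First, your monotonicity claim ``as Boruvka merges components their out-degrees only decrease, so a super-component built out of sparse components stays sparse'' is not correct as written: merging two sets can only \emph{add} their outgoing-edge counts (minus the edges between them), so a union of many sparse components need not have at most $x^5$ outgoing edges. The Ghaffari--Parter sketches are parametrised with $10\log x$ sampling levels precisely so that they tolerate this growth, and the analysis does not rely on super-components remaining below the initial $x^5$ threshold; you should rephrase this step accordingly rather than invoke a false monotonicity. Second, you correctly flag the dense case as the crux and leave it largely as a plan; note that in the algorithm every \emph{node} (not just one node per component) broadcasts a random outgoing edge in step~\ref{star:random}, which is what gives the dense components enough random edges to merge sufficiently --- a single edge per component would not suffice, and your sketch of the dense argument reads as if only one edge per component is available.
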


\subsection{Range efficient algorithm for connected components}
In order to implement a phase of Algorithm~\ref{alg:logstar} in the rcast model with
the range $r=2$ and in $O(1)$ rounds, we need a new method
of computing and distributing sketches.

Assume that a partition $\mathcal{C}$ into components is known
to all nodes at the beginning of a phase.
%
%
\tj{Consider a \emph{meta-graph}, whose nodes correspond to the current 
components, where $C_i, C_j$ are connected by
a \tj{\emph{meta-edge} iff there is an edge $\{u,v\}$ such }
that $u\in C_i$ and $v\in C_j$.  
From the ``point of view'' of nodes it means that $u$
and $v$ are connected by an edge iff $C^u$ and $C^v$ are
neighbors in the current meta-graph.

In our algorithm, the sketches are computed 
for the meta-graph and delivered to \textit{all} nodes. 
On the basis of the sketches, each node can simulate $\log x$ steps of the Boruvka's algorithm
\tj{on the meta-graph},
merging components into larger ones.}
After determining new larger components, information about the real edges
connecting merged input components \tj{(into new larger ones)}
are determined and broadcasted to all nodes.
\comment{To this aim, for each new component $C'$, a rooted tree $T_{C'}$
of components from $\mathcal{C}$ forming $C'$ is build.
The nodes $C_i, C_j\in\mathcal{C}$ can be connected by an 
edge
in $T_{C'}$ if a meta-edge connecting $C_i$ and $C_j$ has been
decoded during a phase. 
Broadcasting random edges 
\tj{(step \ref{star:random} of the algorithm)}
requires one round
with range $r=1$, thus this part might remain unchanged.
%
}
\noindent Below, we describe this strategy in more detail.

\paragraph{Computing sketches in a meta-graph}
For computing (and broadcasting) multi-sketches in a meta-graph, 
each component $C_i$ is associated with 
a \emph{representative set} $V_i$ of size $\log^2x$.
In the first round, each node $v$ sends the bit $1$ to each element of $V_i$ for $i\in[\log^2x]$ iff 
$(v,u)\in E$ for some $u\in C_i$. Otherwise, $v$ sends $0$ to each node of $V_i$.
After such a round each node of $V_i$ knows all neighbors of $C_i$ in the meta-graph.
In order to compute and distribute a multi-sketch of $C_i$ in $O(1)$ rounds, we 
make the $j$th element of $V_i$ (say, $v_{i,j}$) responsible for the $j$th row of the multi-sketch of $C_i$.
For each edge $(C_i, C_{i'})$ such that $i>i'$, $v_{i,j}$ chooses with appropriate probability
\tj{(i.e., $1/2^{1+(j-1)\mod 10\log x}$)} whether this edge is included in the $j$th row of the multi-sketch.
In the second communication round $v_{i,j}$ sends 1 to $v_{i',j}$ when the edge is included
and $0$ otherwise. Using own random choices and messages received in both rounds, $v_{i,j}$
computes the $j$th row of the multi-sketch of $C_i$ and broadcasts it to the whole
network.
More precise description of the above strategy is presented in Algorithm~\ref{alg:sketches}. Below, we summarize efficiency of 
this algorithm.
\tj{\begin{proposition}\labell{p:sketches:our}
Assume that Algorithm~\ref{alg:sketches} is executed for
a partition of an input graph in at most $n/\log^2x$ components.
Then,
the algorithm
determines multi-sketches of all nodes in the meta-graph
and broadcasts
them to the whole network in $O(1)$ rounds, with range $r=2$ and edge capacity $O(\log n)$.
\end{proposition}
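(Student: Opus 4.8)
The plan is to verify three things about Algorithm~\ref{alg:sketches}: that it correctly computes the multi-sketches of the \emph{meta}-graph, that it runs in $O(1)$ rounds, and that it stays within range $2$ and edge capacity $O(\log n)$. Correctness will be reduced to the xor-decomposition of Proposition~\ref{f:xor}.2, now applied with the meta-graph playing the role of the input graph: by definition a given row of $\text{multi-sketch}(C_i)$ is the xor of the IDs of the meta-edges incident to $C_i$ that are sampled into the corresponding level, and by Proposition~\ref{f:xor}.2 this equals the xor, over all meta-neighbours $C_{i'}$ of $C_i$, of the contribution of the single meta-edge $\{C_i,C_{i'}\}$. So it suffices to exhibit, for each meta-edge and each row index $j$, a unique node that learns (i) whether the meta-edge is present, (ii) whether it is sampled into row $j$, and (iii) the ID of the meta-edge, and to check that this information reaches the representative responsible for row $j$ at both endpoints. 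Here the ID of the meta-edge $\{C_i,C_{i'}\}$ is taken to be the ID pre-assigned in the preprocessing to the pair $\{\text{ID}(C_i),\text{ID}(C_{i'})\}$, so it is computable by every node from the (publicly known) partition.

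First I would record disjointness: each $V_i$ has $\Theta(\log^2 x)$ elements, so by the hypothesis $|\mathbb{C}|=O(n/\log^2x)$ the sets $V_i$ can be chosen pairwise disjoint inside $V$ (this is the only place the bound on the number of components is used), and since a multi-sketch has $\Theta(\log^2 x)$ rows we may make each $v_{i,j}$ responsible for $O(1)$ rows. After Round~1 each node of $V_i$ has received, from \emph{every} node $u$ of the network, the bit ``does $u$ have an edge into $C_i$''; since the partition is known, $v_{i,j}$ can read off the entire meta-neighbourhood of $C_i$. Orienting each meta-edge towards the endpoint of larger component index, $v_{i,j}$ (with $i>i'$) is the unique node that flips the sampling coin for $\{C_i,C_{i'}\}$ in row $j$, with probability $1/2^{1+((j-1)\bmod 10\log x)}$; I would then check that cycling this probability through the rows reproduces, inside each of the $\log x$ sketches, the level structure $E_1,\dots,E_{10\log x}$ and the independence required by Proposition~\ref{f:xor}. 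Round~2 sends this bit from $v_{i,j}$ to $v_{i',j}$, after which each of the two representatives knows whether to xor the ID of $\{C_i,C_{i'}\}$ into its row; xoring over all meta-neighbours yields row $j$ of $\text{multi-sketch}(C_i)$ at $v_{i,j}$ by Proposition~\ref{f:xor}.2. A final round in which each $v_{i,j}$ broadcasts the rows it holds then delivers every component's complete multi-sketch to every node, which is exactly what steps~\ref{star:sketches:comp}--\ref{star:sketches:send} of Algorithm~\ref{alg:logstar} need.

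For the resource bounds: Round~1, and each of the $O(1)$ sub-rounds into which Round~2 is split (one per row a representative is responsible for), transmits a single bit on every link that is used --- a node can send $0$ on the rest --- so each node uses only the messages $\{0,1\}$: range $2$ and edge capacity $1$. In the broadcast round each $v_{i,j}$ sends the $O(1)$ rows it holds, i.e.\ $O(\log n)$ bits, as one message to all nodes: range $1$ and edge capacity $O(\log n)$. Summing over the $O(1)$ rounds gives range $2$ and total edge capacity $O(\log n)$, as claimed.

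I expect the main difficulty to be bookkeeping rather than anything conceptual. The two points needing care are: (a) that the two endpoints of a meta-edge agree on its presence, its ID, and its sampling bit, so that the xor-decomposition of Proposition~\ref{f:xor} transfers verbatim from the input graph to the meta-graph; and (b) that the per-row sampling probabilities, cycled through the $10\log x$ levels within each of the $\log x$ sketches, reproduce exactly the independent sampling model under which Proposition~\ref{f:xor}.1 guarantees decoding with probability $\Omega(1)$. Once the index arithmetic and the disjointness of the $V_i$ are pinned down, the round, range, and capacity counts follow at once.
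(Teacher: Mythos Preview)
Your proposal is correct and follows essentially the same approach as the paper, which presents the argument as the algorithmic description immediately preceding the proposition rather than as a separate formal proof. Your write-up is in fact more careful than the paper's: you make explicit the appeal to Proposition~\ref{f:xor}.2 for correctness of the xor computation on the meta-graph, and you spell out that the hypothesis $|\mathbb{C}|\le n/\log^2 x$ is exactly what guarantees the representative sets $V_i$ can be taken pairwise disjoint inside $V$; the paper leaves both of these implicit. One minor remark: in the paper's parameter setting each $v_{i,j}$ is responsible for exactly one row (since $|V_i|=10\log^2 x$ matches the number of rows in a multi-sketch), so your splitting of Round~2 into ``$O(1)$ sub-rounds, one per row'' is not actually needed here, though it does no harm.
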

}

    \begin{algorithm}[H]
      \caption{LinearSketches\Comment{the algorithm for a node $v\in C_k$}}
      \label{alg:sketches}
      \begin{algorithmic}[1]
			\State \label{sketches:begin} $y\gets 10\log^2x$
			\State Let $C_k$ be the component containing $v$
			\State $V_i\gets \{v_{i,1},\ldots,v_{i,y}\}$ for $i\in[n/y]$, where $v_{i,j}=v_{(i-1)y+j}$
			\State Let $v=v_{p,r}$
			\For{each $j\in[n/y]$}
				\If{$\{\{v,u\}\,|\, u\in C_j\}\neq \emptyset$} 
					\State $b_j\gets 1$
				\Else
					\State $b_j\gets 0$
				\EndIf
			\EndFor
			\State \label{sketches:round1} \textbf{Round 1:} $v$ sends $b_j$ to each node of $V_j$ for each $j\in[n/y]$
			\State $E(C_p)\gets \{(C_p,C_l)\,|\, $1$\text{ received from some } u\in C_l\}$
			\State \textbf{for} each $e=(C_p,C_l)\in E(C_p)$ such that $l>p$: set $b_l\gets 1$ with probability $1/2^{1+(r-1)\text{ mod } y}$, $b_l\gets 0$ otherwise
			\State \textbf{Round 2:} $v=v_{p,r}$ sends $b_l$ to $v_{l,r}$ for each $l\in [n/y]$
			\State \textbf{for} each $l<p$: set $b_l$ to the bit received in Round~2 from $v_{l,r}$
			\State multi-sketch$_r(C_p)\gets \text{xor}_{k\in[n/y]}b_k\cdot \text{ID}((C_p,C_k))$
			\State \textbf{Round 3:} $v_{p,r}$ sends multi-sketch$_r(C_p)$ to all nodes
      \end{algorithmic}
    \end{algorithm}

\paragraph{Determining real edges connecting merged components}
%
An offline simulation of the Boruvka's algorithm based on meta-edges derived from sketches
gives a new partition into components $\mathcal{C}'$. Each component $C'$ of this new
partition is a connected subgraph of the meta-graph, with meta-edges connecting elements
of $C'$ known to all nodes (determined by sketches). 
In order to determine real edges connecting \tj{elements of} $C'$, a rooted spanning tree for $C'$ is chosen
arbitrarily 
but in the same way by each node $v\in C'$.
For each node $v$, if $v$ is adjacent to an (``real'') edge $(u,v)$ such that 
$C^u$ is the parent of $C^v$ then $v$ chooses such edge arbitrarily.
Then, in one round, $v$ broadcasts such chosen edge to the whole
network.
%

Let CCLogstarR be a variant of Alg.~\ref{alg:logstar}
where the steps~\ref{star:sketches:comp} and \ref{star:sketches:send}
of the algorithm is implemented through Algorithm~\ref{alg:sketches}. 
In order to decode the real edges in the input graph 
between joined components, we use the method described above
which requires one round with capacity $O(\log n)$ and
range $1$ in order to make step~\ref{star:real} of 
our implementation of Alg.~\ref{alg:logstar}.
\begin{lemma}\labell{l:logstarr}
Algorithm CCLogstarR identifies the connected components of the input
graph in $O(\log^* n)$ communication rounds in the
rcast$(n,2)$ model, with high probability.
\end{lemma}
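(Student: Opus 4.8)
The plan is to show that CCLogstarR is a faithful implementation of the Ghaffari--Parter template (Algorithm~\ref{alg:logstar}) in which only the communication-heavy steps have been replaced by range-$2$ gadgets, so that correctness is inherited from \cite{GhaffariParter2016} while the round and range bounds follow from the propositions already proved. First I would argue that the while-loop of CCLogstarR performs the same sequence of component merges as Algorithm~\ref{alg:logstar}. The only structural change is that multi-sketches are now computed for the \emph{meta-graph} rather than directly for the input graph, and that after the offline Boruvka simulation one has to recover a genuine input edge witnessing each merge (step~\ref{star:real}). Since there is a meta-edge between $C_i$ and $C_j$ precisely when some input edge joins them, a sketch of a set of components in the meta-graph is, modulo the edge-ID encoding, a sketch of the corresponding vertex set of $G$; hence Proposition~\ref{f:xor} applies verbatim, and the offline simulation of $\Theta(\log x)$ Boruvka steps on the meta-graph yields the same (whp) reduction of the number of growable components as in the original analysis, with step~\ref{star:random} still taking care of components having more than $x^5$ outgoing edges.

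Next I would account for the per-phase round complexity and range. Computing and broadcasting all multi-sketches of the meta-graph is done by Algorithm~\ref{alg:sketches}, which by Proposition~\ref{p:sketches:our} costs $O(1)$ rounds with range $r=2$ and edge capacity $O(\log n)$, using the invariant $|\mathcal{C}|\le n/\log^2 x$ that holds at the start of each phase by the choice of $x$ in line~3 of Algorithm~\ref{alg:logstar}. The offline Boruvka simulation (step~\ref{star:boruvka}) and the deactivation of non-growable components (step~\ref{star:deactivate}) are purely local, hence free. Decoding the real edges between merged components (step~\ref{star:real}) is one round in which each node broadcasts at most one input edge directed towards the parent of its component in a commonly fixed rooted spanning tree of the merged meta-component; this has range $r=1$ and capacity $O(\log n)$. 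Finally, broadcasting one random incident edge per component (step~\ref{star:random}) is a single broadcast round with range $r=1$. Altogether each iteration of the while-loop uses $O(1)$ rounds and range at most $2$.

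I would then combine the two ingredients. By Lemma~\ref{l:phase:logstar}, each phase reduces the number of growable components from $n/\log^2 x$ to $n/x$, whp, so after $O(\log^* n)$ phases no growable component remains and the current partition equals the connected components of $G$; step~\ref{star:real} guarantees that the maintained partition always consists of connected subgraphs of $G$. Since each phase costs $O(1)$ rounds, the total round complexity in $\rcast(n,2)$ is $O(\log^* n)$. A union bound over the $O(\log^* n)$ phases and the $\Theta(\log x)$ sketch-decoding attempts inside each phase upgrades the per-step $\Omega(1)$ success probability to a high-probability guarantee for the whole execution, after boosting the constant in the sketch dimension exactly as in \cite{GhaffariParter2016}.

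The main obstacle I expect is the correctness bookkeeping of the first paragraph rather than the complexity count: one must check carefully that passing from $G$ to the meta-graph does not change which merges the sketches enable, that the random edge-ID preprocessing together with the probabilistic row assignments performed in Algorithm~\ref{alg:sketches} reproduce exactly the distribution of sketches assumed in \cite{GhaffariParter2016}, and that step~\ref{star:real} always exhibits a genuine input edge for every meta-edge used by the Boruvka simulation. Once this equivalence is in place, the round/range accounting is immediate from Propositions~\ref{p:sketches:our} and~\ref{f:xor} together with the trivial cost of a single broadcast.
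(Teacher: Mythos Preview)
Your proposal is correct and follows essentially the same approach as the paper: invoke Proposition~\ref{p:sketches:our} for the sketch computation, observe that step~\ref{star:real} is a single range-$1$ broadcast, and then apply Lemma~\ref{l:phase:logstar} to bound the number of phases by $O(\log^* n)$. The one place where the paper's argument is cleaner is the handling of the meta-graph: rather than arguing that a meta-graph sketch is ``modulo the edge-ID encoding'' a sketch of the corresponding vertex set of $G$ (which is not literally true, since the meta-graph collapses parallel edges), the paper simply notes that the phase-reduction guarantee of Lemma~\ref{l:phase:logstar} holds for \emph{any} input graph and hence in particular for the meta-graph of current components---this sidesteps the delicate comparison you flag in your last paragraph.
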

\tj{ 
\begin{proof}
Prop.~\ref{p:sketches:our} implies that, in each phase, all nodes will know
sketches of all components and each node can perform locally
$\Theta(\log x)$ 
steps of the Boruvka's algorithm on the meta-graph in the way
described in \cite{GhaffariParter2016}.  
Moreover, the real edges showing connectivity of components
are decoded as described above in one round in each iteration
of the while-loop.
Thus, the algorithm determines connected components in
$O(\log^* n)$ rounds with range $2$, by Lemma~\ref{l:phase:logstar}.\footnote{One doubt which might appear here is that the authors
of \cite{GhaffariParter2016} deal with real edges in all phases,
while our implementation just considers meta-edges between
components. However, the reduction of the number of components
in a phase holds whp for arbitrary graph, thus also for a meta-graph
of components.}
\end{proof}
}

\subsection{Reduction of total edge capacity}
In this section we will show that it is possible to achieve the optimal edge capacity 
\tj{$O(\log n)$ without increasing the range or round complexity
of CCLogstarR.}
More precisely, we show the following theorem.
\begin{theorem}\labell{t:logstar}
There is a randomized algorithm in the rcast$(n,2)$ congested clique that identifies the connected components of the input
graph with total edge capacity $O(\log n)$ in $O(\log^* n)$ communication rounds, with high probability.
\end{theorem}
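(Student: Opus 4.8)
The plan is to take the algorithm CCLogstarR of Lemma~\ref{l:logstarr} as a starting point, keep its round complexity $O(\log^*n)$ and its range $2$ intact, and only re-implement its communication so that the total edge capacity drops from the naive $O(\log n\cdot\log^*n)$ (a phase of capacity $O(\log n)$ for each of the $O(\log^*n)$ phases) to $O(\log n)$. The first step is to isolate, inside a phase with parameter $x$, exactly where the $O(\log n)$ capacity is spent. In such a phase there are at most $n/(10\log^2 x)$ growable components, each equipped with a pairwise disjoint \emph{representative set} of $\log^2x$ nodes, and the $O(\log n)$ capacity of the phase is consumed by precisely three operations: (i) broadcasting the $\Theta(\log^2x)$ rows of the multi\nobreakdash-sketch of each component (Algorithm~\ref{alg:sketches}); (ii) broadcasting, for each newly merged component, the real input edges realizing its meta-edges (step~\ref{star:real} of Algorithm~\ref{alg:logstar}); and (iii) broadcasting one random incident edge per component (step~\ref{star:random}).

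For operations (ii) and (iii) the reduction reuses the accounting of Theorem~\ref{t:bandwidthTheorem}. Each of these operations broadcasts a single $O(\log n)$\nobreakdash-bit message on behalf of each growable component, i.e.\ $O\!\left(\tfrac{n}{\log^2x}\log n\right)$ bits altogether. I would have the node holding such a message first replicate it inside its component's representative set by LocalBroadcast (Proposition~\ref{p:local:broadcast}), and then have the $\log^2x$ representatives broadcast it in $\log^2x$ equal pieces using global broadcast (Proposition~\ref{p:broadcast}); this way each representative broadcasts only $O(\log n/\log^2x)$ bits. Summed over phases, the contribution of (ii) and (iii) is $\sum_t O\!\left(\log n/\log^2x_t\right)$; since successive parameters grow tower\nobreakdash-exponentially (a phase takes the component count from $n/\log^2x_t$ down to $n/x_t=n/\log^2x_{t+1}$, so $\log^2x_{t+1}=x_t$), the series $\sum_t 1/\log^2 x_t$ is bounded by a constant, and (ii) plus (iii) add only $O(\log n)$ to the total edge capacity, with range still $2$.

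The main obstacle is operation (i). Here the component count and the multi\nobreakdash-sketch length conspire: $\tfrac{n}{10\log^2x}$ components times $10\log^2x$ rows is exactly $n$ rows of $\Theta(\log n)$ bits per phase, so no redistribution among representatives \emph{alone} can bring a single phase below capacity $\Theta(\log n)$, and broadcasting all sketches in every phase reproduces the $\Theta(\log n\cdot\log^*n)$ bound. The fix I would pursue is never to broadcast the sketch rows to the whole network. Instead, the $\Theta(\log x_t)$ Boruvka steps that phase $t$ performs on the meta\nobreakdash-graph are carried out one step at a time \emph{inside the representative sets}: for the current meta\nobreakdash-components, the representatives holding the rows of the sketch used in the current step pool their $\Theta(\log x_t)$ relevant rows (routing the $\Theta(\log n)$\nobreakdash-bit rows through dedicated helper nodes so that each edge carries only $O(\log n/\log^2x_t)$ bits), decode an incident meta\nobreakdash-edge, and then the \emph{only} thing broadcast globally is the list of chosen meta\nobreakdash-edges --- one $O(\log n)$\nobreakdash-bit item per current meta\nobreakdash-component, hence again an $O(\log n/\log^2x_t)$\nobreakdash-capacity broadcast via Propositions~\ref{p:local:broadcast} and~\ref{p:broadcast}. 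Merging two meta\nobreakdash-components for the next step is realized by XOR\nobreakdash-ing the corresponding rows (Proposition~\ref{f:xor}.2), a fixed representative\nobreakdash-local pattern that can itself be split to $O(1)$ bits per involved edge; because the meta\nobreakdash-component counts shrink geometrically within a phase, the per\nobreakdash-step costs of a phase form a geometric series bounded by its first term, and across phases the $x_t$ again give $\sum_t O(\log n/\log^2 x_t)=O(\log n)$.

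The technical heart of the argument --- and the step I expect to be the real obstacle --- is to certify that this step\nobreakdash-by\nobreakdash-step, representative\nobreakdash-local simulation still completes a phase within the round budget needed to preserve $O(\log^*n)$ total rounds and range $2$, because a phase now touches its $\Theta(\log x_t)$ Boruvka steps explicitly rather than offline at a leader. I would handle this by the staging device of Theorem~\ref{t:bandwidthTheorem}: route all phases whose component count is still $\Omega(n/\mathrm{polylog}\,n)$ through direct incident\nobreakdash-edge Boruvka broadcasts as in Section~3 (cheap, by splitting each component's single edge among its members), so that only the phases where $x_t$ is already large --- where the Ghaffari--Parter collapse of $\Theta(\log x_t)$ steps is genuinely needed --- are implemented with sketches, and argue that in those phases the per\nobreakdash-step rounds and the $O(\log n/\log^2x_t)$ capacities telescope correctly. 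Correctness and the $O(\log^*n)$ phase count are inherited unchanged from Lemma~\ref{l:phase:logstar} and Lemma~\ref{l:logstarr}, since only the implementation of per\nobreakdash-phase communication is modified; combined with Fact~\ref{f:capacity:lower} this shows the resulting $O(\log n)$ edge capacity is optimal.
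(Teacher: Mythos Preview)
Your identification of the three expensive operations is correct, and your treatment of (ii) and (iii) --- pushing the $O(\log n)$-bit message into the representative set by LocalBroadcast and then spreading it via global broadcast --- is essentially the paper's approach. The genuine gap is in your handling of (i).

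Your diagnosis of (i) is accurate as stated: with representative sets of size $\log^2 x$ and $\frac{n}{10\log^2 x}$ components, there are $n$ rows of $\Theta(\log n)$ bits and no redistribution among $\log^2 x$ representatives can bring the capacity below $\Theta(\log n)$. But the cure you propose --- never broadcasting sketches and instead running the $\Theta(\log x_t)$ Boruvka steps one at a time inside the representative sets, globally announcing only the decoded meta-edges after each step --- destroys the round complexity. Each explicit Boruvka step costs $\Omega(1)$ rounds, so a phase with parameter $x_t$ costs $\Omega(\log x_t)$ rounds; in the final phase $x_t$ is of order $n$, giving $\Omega(\log n)$ rounds for that phase alone. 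Your ``staging device'' (doing direct Boruvka in the early phases) does not help here: the issue is precisely the late phases, where $\log x_t$ is large. Telescoping capacities within a phase is fine, but rounds do not telescope.

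The paper avoids the obstacle you correctly spotted not by changing how sketches are used, but by changing the phase threshold: it takes $x$ so that the number of components is below $\frac{n}{10\log^3 x}$ rather than $\frac{n}{10\log^2 x}$. Representative sets then have $10\log^3 x$ nodes while the multi-sketch still has only $10\log^2 x$ rows of $O(\log n)$ bits, so the whole multi-sketch can be spread inside $V_i$ by LocalBroadcast and then globally broadcast by the $\log^3 x$ representatives with capacity $O(\log n/\log x)$ --- and the phase still runs in $O(1)$ rounds, preserving $O(\log^* n)$ overall. The resulting recursion $f_i=2^{f_{i-1}^{1/3}}$ still reaches $n$ in $O(\log^* n)$ steps, and the per-phase capacities $O(\log n/\log x_t)$ sum to $O(\log n)$.
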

\begin{proof}
\par There are three steps of our \rcast(n,2) implementation of CCLogStar (see Lemma~\ref{l:logstarr}), which require $\Omega(\log n)$ bits in each phase. The first is announcing multi-sketches
(step~\ref{star:sketches:send} of Alg.~\ref{alg:logstar}, implemented as Round~3 in Alg.~\ref{alg:sketches}), the second is determining and announcing real edges
connecting components (step~\ref{star:real} of Alg.~\ref{alg:logstar}, required because of the fact that sketches are computed with
respect to the meta-graph -- see the description of CCLogstarR) and the third is announcing a random edge (step~\ref{star:random} of Alg.~\ref{alg:logstar}). 

\par In order to announce multi-sketches with smaller edge capacity, we slightly change the whole algorithm. In each phase we will select 
$x$ 
equal to $\min\{y\,|\, |\mathbb{C}|<\frac{n}{10\log^3x}\}$ instead of
$\min\{y\,|\, |\mathbb{C}|<\frac{n}{10\log^2x}\}$.
Therefore representative sets in Alg.~\ref{alg:sketches} can now have size $10\log^3x$. In order to compute sketches in Alg.~\ref{alg:sketches}, we will use only the $10\log^2 x$ nodes from each representative set $V_i$ as presented before. The only part requiring a change is announcing the meta-sketch to the whole network (\textbf{Round~3} of Alg.~\ref{alg:sketches}). 
Consider a representative set $V_i$ of size $10\log^3x$ and its subset $V'_i\subset V_i$ of size $10\log^2x$ such that the $j$th node of $V'_i$ has computed the $j$th row of the multi-sketch
of $C_i$ of $O(\log n)$ bits. Then, using the local broadcast primitive (Proposition~\ref{p:local:broadcast}) for $T=V'_i$, $R=V_i$ and $b=O(\log n)$, the whole multi-sketch$(C_i)$ can be
distributed to all nodes of $V_i$ in $O(1)$ rounds with edge capacity $1$, since
%
$\log n \log^2 x \in O(n)$. Next, multi-sketch$(C_i)$ can be announced to the whole network 
by the global broadcast procedure with $S=V_i$ and $b=O(\log^2x\log n)$.
By Proposition~\ref{p:broadcast}, this task can be done in one round with range $1$
and edge capacity $O(\frac{\log^2x\log n}{\log^3 x})=O(\frac{\log n}{\log x})$.

\comment{For each representative set, $\Theta(\log^2 x)$ nodes with smallest ids have one row of multi-sketch consisting of $O(\log n)$ bits. Therefore, each of those nodes can announce the row of the multi-sketch to all nodes in the representative sets, using Proposition~\ref{p:local:broadcast} as $\log n \log^2 x \in O(n)$. Then multi-sketches can be announced to the whole network using edge capacity $O(\frac{\log n}{\log x})$.
}

\par In order to determine and announce real edges connecting ``old'' components into ``new'' ones (step~\ref{star:real} of Alg.~\ref{alg:logstar}), we execute the following procedure. Let $\mathbb{C}$ denote the ``old'' partition into components before step~\ref{star:boruvka} on the meta-graph and let $\mathbb{C}'$ denote the ``new'' partition after that step.
After determining $\mathbb{C}'$ and decoding meta-edges from sketches locally (the simulation of Boruvka's algorithm in step~\ref{star:boruvka} of Alg.~\ref{alg:logstar}), all nodes build locally a forest $\mathbb{F}$ of rooted trees (using disclosed meta-edges), connecting old components from $\mathbb{C}$ in the new ones from
$\mathbb{C}'$. Then, in a separate round,
each node $v$ sends the bit $B=1$ iff $v$ is incident to an edge connecting its old component
$C^v$ to the parent of $C^v$ in the appropriate tree; $v$ sends $B=0$ otherwise. 
Consider $C$ which is not the root of a tree in $\mathbb{F}$.
Based on transmitted bits, the node $v_C$ is chosen as the one with the smallest ID among elements of $C$ which sent $B=1$ (i.e., among nodes incident to edges with an endpoint in the parent of $C$ in $\mathbb{F}$). 
%
Then, $v_C$ announces the real edge connecting $C$ and the parent of $C$ to all nodes in the representative set of $C$ by local broadcast (Proposition~\ref{p:local:broadcast}) with $T=\{v_C\}$, $R=V^C$ and $b=\log n$,
where $V^C$ is the representative set of the component $C$.
Then, the global broadcast procedure is applied with $S=V^C$ and $b=\log n$
(see Proposition~\ref{p:broadcast}).
In this way a real edge connecting $C$ with the parent of $C$ is announced
to the whole network . Thus we implement step~\ref{star:real}
of Alg.~\ref{alg:logstar} in $O(1)$ rounds with edge capacity $2$.

\par In order to select and announce a random meta-edge incident to each component  (step~\ref{star:random} of Alg.~\ref{alg:logstar}), we use the strategy from the previous
paragraph. Namely, a random edge incident to the component $C$ is chosen by the node with
the smallest ID in $C$. Then, this edge is broadcasted to the whole network using 
the local broadcast and the global broadcast primitives, with help of the representative
set of $C$. The only issue here is that a node $v$ knows only edges incident to $v$, not
the edges incident to the whole component $C^v$. However, each node can learn
neighborhood of its component using Round~1 from Alg.~\ref{alg:sketches}.
Thus, the choice of a random edge is preceded by such a round (described by steps~\ref{sketches:begin}--\ref{sketches:round1} of Alg.~\ref{alg:sketches}).
Thus, we chose a random edge incident to each component in the new meta-graph; we can decode the real
edges corresponding to the chosen meta-edges as described in the previous paragraph.
In this way, we implement step~\ref{star:random}
of Alg.~\ref{alg:logstar} in $O(1)$ rounds with edge capacity $1$ and range $2$.


\par Summarizing, we obtain an algorithm which determines connected components 
in $O(T(n))$ rounds whp, where $T(n)$ is equal to the smallest $i$ such that
$f_i\ge n$ for the sequence $f_1=c$ and $f_i=2^{f_{i-1}^{1/3}}$ for a constant $c\ge 1$.\footnote{One can make $O(\log c)$ steps of Boruvka's algorithm at the beginning, in order to start from the components of size $\ge c$.} 
One can easily verify
that $f_i\geq n$ for $i=O(\log^* n)$.
As we showed above, the range of our algorithm is $r=2$.
The edge capacity of phase $i$ is $O\left(\frac{\log n}{\log x}\right)$, where
$x\geq f_i$ whp. As $f_i>2^i$ for each $i$ if the constant $c=f_1$ is large enough, the total edge capacity is 
$O(\sum\limits_i \frac{\log n}{2^i}) = O(\log n)$.
\end{proof}


	\section{Conclusions}

We have shown the first sub-logarithmic algorithm for connected components in the broadcast
congested clique. Moreover, we provided efficient rcast$(n,2)$ implementations of the
deterministic MSF algorithm \cite{Lotker:2003:MCO:777412.777428} and randomized
algorithm for connected components \cite{GhaffariParter2016}. Both implementations are
not only time efficient but also optimal with respect to maximal edge capacity of communication
edges. An interesting research problem arising from these results is to determine a relationship between adaptiveness (the
number of rounds) and total capacity of communication edges.
Moreover, it is still not known whether MSF can be computed in $o(\log n)$ rounds or connected components can be computed in $o(\log n/\log\log n)$ rounds in the broadcast congested clique.


\comment{ 
In this paper, we have shown the first sub-logarithmic algorithm for connected components in the broadcast
congested clique. Moreover, we provided implementation of $O(\log \log n)$ round deterministic algorithm \cite{Lotker:2003:MCO:777412.777428} finding MSF in \rcast(n,2), which led to version of algorithm with total capacity of communication edges $O(\log n)$, which is optimal. Additionally we provided \rcast(n,2) implementation of randomized connected components algorithm \cite{GhaffariParter2016}, however we still do not know whether it is possible to find MST in $O(\log^* n)$ rounds in \rcast(n,2) and whether it is possible to find connected components in $O(\log^* n)$ rounds and optimal total capacity of communication edges.
\par An interesting direction arising from these results is to determine a relationship between adaptiveness (the
number of rounds) and total capacity of communication edges.
}

\comment{Another interesting research direction concerns the impact of the message size on the round complexity of problems in the congested clique. Recently, it was shown that connectivity can be solved in the broadcast congested clique by a randomized algorithm in just one round for message size $O(\log^3n)$ \cite{AhnGM12} and by a deterministic algorithm in three rounds for message size $O(\sqrt{n}\log n)$ \cite{DBLP:journals/corr/MontealegreT16}.}


	\section{Appendix}
	\iffull
\else
\subsection{Proof of Fact~\ref{simulationTheorem}}
The local computation part of the algorithm $A$ stays the same. As for communication part, we can split each message of original protocol into blocks of size $\lfloor \log_r \rfloor$ and sent them in separate rounds.  Therefore, if in one round protocol sent message of size $\log n$, after $\frac{\log n}{\lfloor \log_r \rfloor} = O(\log_r n)$ rounds whole message would be sent to receiver, with no more than $2^{\lfloor \log_r \rfloor} \leq r$ messages per round.
\fi

\comment{
\subsection{Boruvka's Algorithm -- implementation}
    \par Each node will maintain in each step some spanning forest in its local memory. By own component of node $v$ we will understand a part of such forest connected to $v$ and we will denote it by $C_v$. By other components we will understand components composed of vertices not connected to $v$ in such forest. Let us denote the lightest edge outgoing from node $v$ to some other component by $e_v$. 

    \begin{algorithm}
      \caption{BroadcastMSF}
      \label{BRMST}
      \begin{algorithmic}[1]
        \While{$E \neq \emptyset$}
          \State \textbf{each node $v$ broadcasts $e_v$ to all other nodes}
          \State \textbf{each node executes step of Boruvka's algorithm}
          \State \textbf{let $U$ be the set of the original edges between nodes in one component}
          \State{$E \gets E \setminus U$}
          \State
          \Comment each node $v$ knows which edges have second end in $C_v$, thus can exclude it from further computation
        \EndWhile
      \end{algorithmic}
    \end{algorithm}
}

\comment{
\iffull
\subsection{$O(\log^\star n)$ Ghaffari-Parter algorithm in the unicast congested clique}

    \par Algorithm presented in \cite{GhaffariParter2016} with our small changes, can be described as in \ref{RCC}. The analysis of complexity and correctness is exactly the same, as in original version, thus not provided here.
    
    \begin{algorithm}
      \caption{Randomized Connected components}
      \label{RCC}
      \begin{algorithmic}[1]
        \While{$E \neq \emptyset$}
          \State
          \Comment Step(1), take care of low degree components
          \State \textbf{calculate component graph}
          \Comment {there are $O(\frac{n}{\log^2 x})$ components}
          \State \textbf{calculate $\log x$ sketches for each components}
          \State \textbf{send sketches to all nodes}
          \State \textbf {using sketches simulate $\Theta(\log x)$ Boruvka's steps}
          \State \textbf {extract real edges}
          \State \textbf{let $U$ be the set of the original edges between nodes in one component}
          \State{$E \gets E \setminus U$}
          \State
          \Comment each node $v$ knows which edges have second end in $C_v$, thus can exclude it from further computation
          \State
          \Comment Step(2), take care of high degree components
          \State {each node $v$ announces random edge connecting $C_v$ with some other component}
          \State {each node executes locally one Boruvka's step}
          \State \textbf{let $U$ be the set of the original edges between nodes in one component}
          \State{$E \gets E \setminus U$}
          \Comment Step(3), take care of high degree components
          \State each node, locally, deactivates components which are not growable and are small - having less than $8x$ nodes
        \EndWhile
      \end{algorithmic}
    \end{algorithm}
    
\fi
}

\comment{    
    \subsubsection{Implementation in $\rcast(n,2)$, step 1}
    \par In order to provide $\rcast(n,2)$ implementation of this algorithm we will again perform computations on the component graph and extract initial graph edges after one stage. We want do this part not exactly as in deterministic case, as we need small modifications to compute sketches.
    \par In deterministic algorithm we had components of come size large than $\mu$ and we had $\mu$ edges to be announced. Here we have \textit{number} of components $O(\frac{n}{\log^2 x})$, but no assumption on size of those components. Thus instead of sending $\mathit{ACK}$ to nodes in adjacent components, we will split all nodes into groups $G_1, G_2, \dots, G_\frac{n}{\log^2 x}$, each of size $\log^2 x$. And instead of sending $\mathit{ACK}$ to all nodes from some component $C_i$ we would like to send it to nodes from group $G_i$. This allows us to compute sketches, as group $G_i$ has full knowledge of neighbourhood of component $C_i$ in component graph. Also size of $G_i$ allows us to send required $\log^2 x$ bits of information per adjacent edge. As each message is $\log^2 x$ bit long, $i$-th node in group can send $i$-th bit of message, thus in one round all messages are sent to all other nodes. 
    \par Now we must only broadcast sketches to all nodes, to provide required information for simulating Boruvka's algorithm.
    Each group has $\log^2 x$ nodes and all required sketches describing neighbourhood of one component are $O(\log^2 x \log n)$ bits long in total. Thus, we can split it into $\log^2 x$ blocks of length $O(\log n)$. Each node from one group will broadcast other block, thus after constant number of rounds whole sketch will be announced.
    \par With such knowledge, each node can simulate Boruvka's algorithm locally, in parallel. After that we know some spanning forest in component graph and may proceed with edges extraction as in deterministic case. 
    \subsubsection{Implementation in $\rcast(n,2)$, step 1}
    \par As for steps 2 and three - the second step is just one Boruvka's algorithm iteration, clearly it is implementable even in broadcast model. The third step requires only some local information to perform. Therefore, it is possible to execute each step in $\rcast(n,2)$. Thus, it is possible to execute some version of \cite{GhaffariParter2016} algorithm in $\rcast(n,2)$.

    }

\comment{
Regarding communication between nodes, we have the following ingredients:
\begin{enumerate}[(a)]
\item
At the beginning of the algorithm, each node needs to assign to itself a new ID of length
$O(\log n)$. This is a random process, which requires a shared seed of size
$O(\log n)$.

\item
A sketch of each low degree component (ZDEFINIOWAC)
has to be computed in each phase, either in the
graph of relevant edges or meta-edges.

\item
A random neigbhour of each high degree component has to be broadcasted
in each phase.
\end{enumerate}
}


\end{document}